\newcommand{\NULL}{\text{{\tt NULL}}\xspace}
\newcommand{\ftab}{\text{\sc D}}
\newcommand{\gtab}{\text{\sc G}}
\newcommand{\PV}{\text{$\mathcal{P}\mathcal{V}$}\xspace}
\newcommand{\realizza}{\rhd} 
\newcommand{\nonrealizza}{\ntriangleright} 
\newcommand{\forza}{\Vdash} 
\newcommand{\nonforza}{\nVdash}
\newcommand{\kripke}{\mbox{$\langle P,$}\mbox{$\leq,$}\mbox{$\rho$,}\mbox{$\forza\rangle$}}
\newcommand{\true}{\mbox{$\top$}\xspace}
\newcommand{\false}{\mbox{$\bot$}\xspace} 
\newcommand{\soddisfa}{\models}
\newcommand{\T}{\text{$\bf T$}}
\newcommand{\F}{\text{$\bf F$}}
\newcommand{\Ttilde}{\text{$\bf \widetilde{\T}$}}
\newcommand{\That}{\text{$\bf \widehat{\T}$}}
\newcommand{\Fu}{\text{$\bf F_l$}\xspace}
\newcommand{\Fn}{\text{$\bf F_n$}\xspace}
\newcommand{\Tn}{\text{$\bf T_n$}\xspace}
\newcommand{\To}{\text{$\bf \overline{\T}$}}
\newcommand{\dum}{\mbox{$\mathbf{Dum}$}\xspace}
\newcommand{\uK}{\mbox{$\underline{K}$}\xspace}
\newcommand{\Rcal}{\mathcal{R}}
\newcommand{\calS}{\mathcal{S}}
\newcommand{\calR}{\mbox{$\mathcal{R}$}\xspace}
\newcommand{\EE}{|}
\newcommand{\npwt}{\text{$\mathbf{D_1}$}\xspace}
\newcommand{\set}{\text{$\mathbf{D_2}$}\xspace}
\newcommand{\et}{\text{$\mathbf{D_3}$}\xspace}
\def\Tab#1#2#3{
       \frac{\phantom{a}\stackrel{\textstyle #1}
       {\phantom{\scriptscriptstyle .}}\phantom{a}}
       {\stackrel{\phantom{\scriptscriptstyle .}}{\textstyle #2}}
{\scriptstyle #3} 
 }
\def\NewTab#1#2#3{\frac{ #1}{ #2}{\scriptstyle #3}}
\title{
Terminating Calculi for Propositional Dummett Logic with Subformula Property
}
\author{
 Guido Fiorino\\
 Dipartimento di Metodi Quantitativi per le Scienze Economiche ed Aziendali,\\
 Universit\`a di Milano-Bicocca, Piazza dell'Ateneo Nuovo, 1, 20126
 Milano, Italy.\\
 guido.fiorino@unimib.it
}
\newtheorem{proposition}{Proposizione}
\newtheorem{lemma}{Lemma}
\newtheorem{theorem}{Theorem}
\newtheorem{remark}{Remark}
\newtheorem{proof}{Proof}
\newtheorem{claim}{Claim}
\begin{document}
\maketitle
\begin{abstract}
In this paper we present  two terminating tableau calculi for
propositional Dummett logic obeying the subformula property. 
The ideas of our calculi rely on the linearly ordered Kripke semantics
of Dummett logic. 
The first calculus  works on two semantical levels:
the present and the next possible world. 
The second calculus employs the usual object language of tableau systems and exploits 
a property of the construction of the completeness theorem to introduce a check which is
an alternative to loop check mechanisms.
\end{abstract}

%\begin{keyword}
 % Dummett Logic, G\"odel Logic, Tableau Calculi, Intermediate Logics,
  %Automated Theorem Proving
%\end{keyword}

\pagestyle{plain}

\section{Introduction}
In this paper we present  two terminating tableau calculi for
propositional Dummett logic obeying the subformula property. The depth of the deductions
of the first calculus  is quadratic and allow to extract a counter model whose depth is
$n+1$ at most, with $n$ the number of propositional variables in the formula to be
decided. The depth of the deductions of the second calculus is linear. 
To avoid the introduction of loop check mechanisms, 
our calculi exploit   the linearly ordered Kripke semantics
of Dummett logic. 
The first calculus uses the ideas presented in paper~\cite{Fiorino:2011}, 
and works on two semantical levels:
the present and the next possible world. 
The second calculus uses the usual $\T$ and $\F$ signs and exploits 
a property of the construction of the completeness theorem to introduce a check which is
an alternative to loop check mechanisms.

Dummett logic has been extensively investigated both by people working in
computer science and in logic.
 The history of this  logic starts with G\"odel, who studied the family of
 logics semantically characterised by a sequence of $n$-valued ($n>2$)
 matrices (\cite{Goedel:86}).  
 In paper~\cite{Dummett:59} Dummett studied the logic semantically characterized
 by an infinite valued matrix which is included in the family of logics studied by
 G\"odel and proved that such a logic is axiomatizable by adding to any Hilbert
 system for propositional intuitionistic logic the axiom scheme $(p\to q)\lor
 (q\to p)$. Moreover, it is well-known that such a logic is semantically
 characterised by linearly ordered Kripke models. Dummett logic also appear in
 investigations related to 
 the relevance logics~\cite{DunMey:71} and Heyting provability~\cite{Visser:1982}. 
 Dummett logic   has been studied also in recent years for its
 relationships with computer science~(\cite{Avron:91}) and fuzzy
 logics~(\cite{Hajek:98}). For a survey of proof theory for G\"odel-Dummett logics
 we cite~\cite{BaaCiaFer:2003}.

To perform automated deduction  
both tableau and sequent calculi have been proposed. 
To get a terminating calculus for Dummett logic obeying to the subformula
property, the main problem is  how to handle formulas of the kind $\T (A\to B)$ 
(left-implicative formulas, in the sequent terminology). A terminating calculus can be achieved 
by introducing specialized rules based on the main connective of $A$. 
In the conclusions of such specialized rules some formulas are not subformulas of the 
premise. 
Calculi of this nature are provided in 
\cite{AveFerMig:99,AvrKon:2001,Dyckhoff:99,Fiorino:00a,Fiorino:2010,Fiorino:2011,Larchey-Wendling:2007}. 
The specialized rules used in~\cite{AveFerMig:99,Dyckhoff:99,Fiorino:00a,Fiorino:2010,Fiorino:2011} 
are based on the
rules proposed by
Vorobiev~\cite{Vorobiev:58} to handle formulas of the kind  $\T(A\to B)$ in propositional
Intuitionistic logic. Papers~\cite{AvrKon:2001,Larchey-Wendling:2007} decompose implicative
formulas by rules whose correctness is justified by the semantics of Dummett logic.

In this work we present calculi whose deductions have,  respectively, linear and quadratic
depth in the size of the formula to be decided and the subformula
property, a feature that the calculi in the above quoted papers fail.

Papers~\cite{BaaFer:99} and~\cite{MetOliGab:2003} provide calculi with the subformula
property. Work~\cite{BaaFer:99} provides a calculus based on sequents called 
{\em sequent of relations
calculus} whose deductions have
exponential depth in the formula to be proved, 
because in the premise of some rules can occur multiple copies of a subformula of the 
conclusion. Moreover, the nodes of a proof with such a system are 
more cumbersome than the nodes of a tableau proof, because every node of the deduction
expresses the  relation order between the subformulas of the formula to be proved. Thus
every node has a quadratic number of formula occurrences. 
Paper~\cite{MetOliGab:2003} provides two goal-oriented calculi, one  based on
hypersequents and one on labelled sequents. The systems are restricted to the implicative
fragment. The first advantage is that our results are given for the
full language. Although a  translation from the full language to the implicative fragment is
possible but, in the case of disjunction  the cost is an exponential blow-up in the
size of the formula. Moreover,  we do not need the more expressive power of hypersequents and
differently from the labelled sequents,  
the object language of our calculi does not depend on the input.

As regards our results, it is worth to remark that our final calculus is a genuine tableau
calculus only employing the usual two signs $\T$ and $\F$ corresponding respectively to the
left-hand side and right-hand side of the sequent systems. 

As regards the techniques used in the paper, we have deliberately chosen
to employ tableaux as proof-systems. Our choice is justified by the fact that the rules
of our calculi can be easily explained by semantical 
considerations based on Kripke models. For this reason we do not use the
sequent systems, whose behaviour is upside-down with respect to a semantical
characterization. However, since tableau and sequent calculi are related, it is an easy
exercise to translate our calculi into sequents. Moreover, correctness and completeness are proved 
always taking the Kripke models for propositional Dummett logic as semantical reference.   
Following the proofs of the completeness theorems, the  procedures we provide can
be modified  to return a proof or a counter model. 
   
\section{Basic definitions and a terminating tableau calculus with the subformula property}
\label{sec:basic-definitions}
We consider the propositional language based on a denumerable set of
propositional variables \PV, the boolean constants \true and
\false and the logical connectives $\land,\lor,\to$.
We call {\em atoms} the elements of $\PV\cup \{ \true,\false\}$. 
In
the following, formulas (respectively  set of formulas and  
propositional variables) are denoted by 
letters $A$, $B$, $C$\dots (respectively $S$, $T$, $U$,\dots and $p$, $q$,
$r$,\dots ) possibly with subscripts or superscripts.  

From the introduction we recall that  {\em Dummett Logic}
($\dum$) can be axiomatized by adding to any axiom system for
propositional intuitionistic logic the axiom scheme $(p\to q)\lor
(q\to p)$ and a  well-known semantical
characterization of $\dum$ is by 
{\em linearly ordered Kripke models}. 
In the paper {\em model}
means a linearly ordered Kripke model, namely a structure
$\underline{K}=\kripke$, where $\langle P,\leq,\rho 
\rangle$ is a linearly ordered set with $\rho$ minimum with respect to
$\leq$ and $\forza$ is the {\em forcing
  relation}, a binary relation on $P\times(\PV\cup \{\true,\false \})$
such that: (i) if $\alpha\forza p$ and $\alpha \leq \beta$, then  $\beta\forza p$;
(ii) for every $\alpha\in P$, $\alpha\forza \true$ holds and
$\alpha\forza \false$ does not hold. Hereafter we denote the members of $P$ by
means of lowercase
letters of the Greek alphabet.

The forcing relation is extended in a standard way to arbitrary formulas of
our language  as follows:
\begin{enumerate}
\item $\alpha\forza A\land B$ iff $\alpha\forza A$ and $\alpha\forza B$;
\item $\alpha\forza A\lor B$ iff $\alpha\forza A$ or $\alpha\forza B$;
\item $\alpha\forza A\to B$ iff, for every $\beta\in P$ such that $\alpha\leq\beta$,
  $\beta\forza A$ implies $\beta\forza B$;
\end{enumerate}
We write $\alpha\nonforza A$ when $\alpha\forza A$ does not hold.  
It is easy to prove that for every formula $A$ the {\em persistence}
property holds: If $\alpha\forza A$
and $\alpha\leq \beta$, then $\beta\forza A$. We say that $\beta$ is {\em immediate
successor} of $\alpha$ iff $\alpha<\beta$ and there is no $\gamma\in P$ such that
$\alpha<\gamma<\beta$.
A formula \emph{$A$ is valid in a model $\uK=\kripke$} if and only if $\rho
\forza A$.  It is well-known that
\dum coincides with the set of formulas valid in all models.

In Figures~\ref{fig:invertible-rules} and~\ref{fig:non-invertible-rule} are given the
rules of \npwt,  a terminating tableau calculus exploiting the truth at present and next possible world
in the Kripke semantics.  

\begin{figure}[ht]
{
  \[
  \begin{array}[t]{|c|}
    \hline
    \begin{array}{ ccc }
      \Tab{S, \T (A \land B) }{S, \T A ,  \T B }{\T\land} 
      &
      %\Tab{S,\Tn(A\land B)}{S,\Tn A,\Tn B}{\Tn\land}
      &
      \Tab{S,\Fu (A\land B)}{S,\Fu A,\Tn B|S,\T A,\Fu B}{\Fu\land}
      
    \end{array}
    \\\\
    \begin{array}[t]{ccc}
      \Tab{S, \T (A \lor B) }{S, \T A | S, \T B }{ \T\lor} 
      &
      %\Tab{S, \Tn(A \lor B) }{S, \Tn A | S, \Tn B }{ \Tn\lor} 
      &
      \Tab{S,\Fu(A\lor B)}{S,\F A,\Fu B|S,\F B, \Fu A}{\Fu\lor} 
    \end{array}
    \\\\
    \begin{array}{cc}
      \Tab{S, \T (A \to B) }{S, \T B | S, \Fu A,\Tn B | S,\Ttilde (A\to B) }{
        \T\to} 
      &
      \Tab{S,\Fu(A\to B)}{S,\T A,\Fu B}{\Fu\to}
    \end{array}
    \\\\
    \begin{array}{cc}
      \Tab{S,\F A}{S,\Fu A | S,\Fn A}{\F-decide}
      &
      \Tab{S,\That(A\to B)}{S,\Fu A,\Tn B | S,\Ttilde (A\to B)}{\That-decide}
    \end{array}
    \\\\
    \hline
  \end{array}
  \]  
}
  \caption{The invertible rules of \npwt.}
  \label{fig:invertible-rules}
\end{figure}
\begin{figure}[ht]
  {
    \[
    \begin{array}{| c |}
      \hline
      \Tab
      {
        S, \Ttilde(A_1\to B_1),\dots,\Ttilde(A_n\to B_n),\Fn C_{n+1} ,\dots, \Fn C_u 
      }
      { 
        S_c,V_1,S_{\F}|\dots|S_c,V_n,S_{\F}|S_c,S_{\That},V_{n+1}|\dots|S_c,S_{\That},V_u
      }
      { 
        \Fn\Ttilde
      }
      \\\\
      \text{where:}\  \ 
      S_{\F}=\{\F C_{n+1} ,\dots, \F C_u \},\ \ 
      S_{\That}=\{\That(A_1\to B_1),\dots,\That(A_n\to B_n)\},\hspace*{\fill}\\[1ex]
      \text{for }j=1,\dots,n,\hspace*{\fill}\\[1ex] 
      V_j = \{ \That(A_1\to B_1),\dots,\That(A_{j-1}\to B_{j-1}),
      \Fu A_j,\Tn B_j,
      \That(A_{j+1}\to B_{j+1}),\dots,\That(A_n\to B_n)\},
      \hspace*{\fill}\\[1ex]
      \text{for }j=n+1,\dots,u,\ 
      V_j = \{\F C_{n+1},\dots,\F C_{j-1},\Fu C_j, \F
      C_{j+1},\dots,\F C_u \} \text{ and }\hspace*{\fill}
      \\[1ex]
      S_c = \{ \T A | \T A \in S \} \cup \{\T A | \Fu A\in S\}\cup
      \{\T A | \Tn A\in S\};\hspace*{\fill}       
      \\[1ex]
      \hline
    \end{array}
    \]
  }   
  \caption{The non-invertible rule $\Fn\Ttilde$.}
  \label{fig:non-invertible-rule}
\end{figure}
The calculus \npwt works on signed formulas, that is
well-formed formulas prefixed with one of the {\em signs}
$\T$ (with $\T A$ to be read ``the fact $A$ is known at the present state of
knowledge''), $\F$ (with $\F A$ to be read ``the fact $A$ is not known at the
present state of knowledge''), $\Fu$ (with $\Fu A$ to be read ``this is the last
state of knowledge where $A$ is not known''), $\Fn$ (with $\Fn A$ to be read ``$A$
is not known in the next state of knowledge''), $\Tn$ (with $\Tn A$ to be
read ``$A$ will be known in the next state of knowledge''), 
$\That$ (with $\That A$ to be read as ``\T A holds and if $A$ is
of the kind $B\to C$, then $\F B$ holds'') and $\Ttilde$ (with $\Ttilde
A$ to be read as ``\T A holds and if $A$ is 
of the kind $B\to C$, then $\Fn B$ holds'')  
and on
sets of signed formulas (hereafter we 
omit the word ``signed'' in front of ``formula'' in all the contexts
where no confusion  arises). 
Formally, the meaning of the signs is provided by the relation 
{\em  realizability}  ($\realizza$)
defined as follows:
Let $\underline{K}=\kripke$ be a model, let $\alpha \in P$,
let $H$ be a signed formula and let $S$ be a set of signed
formulas. We say that $\alpha$ {\em realizes} 
$H$ (respectively $\alpha$ {\em realizes } $S$ and $\uK$ {\em realizes} $S$), and we write $\alpha
\realizza H$ (respectively $\alpha\realizza S$ and $\uK\realizza S$), if the
following conditions hold:
\begin{enumerate}
\item $\alpha\realizza \T A$ iff $\alpha\forza A$;
\item $\alpha\realizza \F A$  iff $\alpha\nonforza A$;
\item $\alpha\realizza \Fn A$  iff there exists $\beta>\alpha$, $\beta\realizza
  \F A$;
\item $\alpha\realizza \Tn A$  iff for every $\beta>\alpha$, $\beta\realizza \T A$;
\item $\alpha\realizza \Fu A$  iff $\alpha\realizza \F A$ and $\alpha\realizza
  \Tn A$;
\item $\alpha\realizza \Ttilde A$  iff $A\equiv B\to C$ and $\alpha\realizza \T A$ and $\alpha\realizza \Fn B$;
\item $\alpha\realizza \That A$  iff $A\equiv B\to C$ and $\alpha\realizza \T
  A$ and $\alpha\realizza \F B$;

\item $\alpha\realizza S$  iff $\alpha$ realizes every formula in $S$.
\end{enumerate} 

%%%%%%%%%%%%%%%%%%%%%%%%%%%%%%%%%%%%%%%%%%%%%%%%%%%%%%%%%%%%%%%%%%%%
By inspecting the rules of the calculus we have that signs  $\Ttilde$ and $\That$ are
used for  implicative formulas only.

From the meaning of the signs we get the conditions that make a set of
formulas inconsistent. A set $S$ is {\em inconsistent} if one of the
following conditions 
holds:\\ 
-$ \T \false\in  S$; \\
-$ \{\T A, \F A\} \subseteq  S$;\\
-$ \{\T A, \Fu A\} \subseteq  S$.
 
It is easy to prove the following proposition:
\begin{proposition}
  \label{prop:contraddittorieta}
  If a set of formulas $S$ is  inconsistent, then for every Kripke model
  $\uK=\kripke$ and for every $\alpha\in P$, $\alpha\nonrealizza S$.
\end{proposition}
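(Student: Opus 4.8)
The plan is to argue by a straightforward case analysis on the three conditions that can make $S$ inconsistent, in each case assuming that some world realizes $S$ and deriving a contradiction by unwinding the realizability clauses. Concretely, I would fix an arbitrary model $\uK=\kripke$ and an arbitrary $\alpha\in P$, and suppose towards a contradiction that $\alpha\realizza S$; by clause~(8) of the definition of $\realizza$ this means $\alpha$ realizes \emph{every} signed formula in $S$, and it suffices to pick out the one or two members of $S$ witnessing inconsistency and show they cannot be realized simultaneously.

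In the first case, $\T\false\in S$ gives $\alpha\realizza\T\false$, hence $\alpha\forza\false$ by clause~(1), contradicting condition~(ii) in the definition of a model, which forbids $\alpha\forza\false$. In the second case, $\{\T A,\F A\}\subseteq S$ gives $\alpha\realizza\T A$ and $\alpha\realizza\F A$, hence $\alpha\forza A$ and $\alpha\nonforza A$ by clauses~(1) and~(2), which is impossible. In the third case, $\{\T A,\Fu A\}\subseteq S$: from $\alpha\realizza\Fu A$ clause~(5) yields in particular $\alpha\realizza\F A$, so together with $\alpha\realizza\T A$ we are reduced to the second case and again obtain $\alpha\forza A$ and $\alpha\nonforza A$, a contradiction. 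Since all three possibilities are ruled out, no $\alpha$ can realize $S$, i.e.\ $\alpha\nonrealizza S$ for every model $\uK$ and every $\alpha\in P$.

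I do not expect any genuine obstacle here: the whole proof is a mechanical appeal to the definitions of forcing and of $\realizza$. The only point worth a moment's attention is the third case, where one must first decompose the sign $\Fu$ into its defining conjunction ($\F$ together with $\Tn$) via clause~(5) before invoking the $\T/\F$ clash; the $\Tn$ component of $\Fu A$ plays no role in the argument. Following the proof, one could also note for later use that the same reasoning shows any superset of an inconsistent set is inconsistent, which is the form in which the proposition is typically applied when establishing soundness of the calculus.
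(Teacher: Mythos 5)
Your proof is correct, and since the paper leaves this proposition as an unproved ``easy to prove'' remark, your mechanical case analysis unfolding the three inconsistency conditions through the clauses of $\realizza$ and $\forza$ is exactly the intended argument. The only substantive point --- decomposing $\Fu A$ via clause~(5) to extract the $\F A$ component before invoking the $\T/\F$ clash --- is handled correctly.
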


% \begin{proof}
  
% \end{proof}

A proof table (or proof tree)
for $S$ is a tree, rooted in $S$ and obtained by the subsequent 
instantiation of the rules of the calculus. 
A {\em closed proof table} is a proof table whose leaves are all inconsistent sets. A
closed proof table is a proof of the calculus and a formula $A$ is provable iff
there exists a closed proof table for $\{\Fu A \}$.

The premise of the rules are instantiated  in a
duplication-free style: in the application of the rules we always
consider that the formulas in evidence in the premise are not in $S$.
We say that a rule \calR
applies to a set $U$ when it is possible to instantiate the premise of
\calR with the set $U$ and we say that a rule \calR applies to a
formula $H\in U$ (respectively the set $\{H_1,\dots,H_n\}\subseteq U$)
to mean that it is possible to instantiate the premise 
of \calR taking $S$ as $U\setminus \{H \}$ (respectively $U\setminus
\{H_1,\dots,H_n\}$). 
As an example, given the set $U=\{\T(B\land C), \T(A\land C),\F(A\lor
B)\}$, by {\em  applying  the rule} $\T\land$ {\em taking } $\T(A\land C)$ {\em as main
  formula} means to instantiate the premise of $\T\land$ taking $S=\{\T(B\land
C),\F(A\lor B)\}$  and $H=\T(A\land C)$. 

% We refer to~\cite{Hahnle:2001} for
% full details on tableaux systems.

Before going into technical details we give an informal description of the whole machinery. 
First, note that there are no rules for sign $\Tn$. The sign $\Tn$ aims to mark
formulas that will be signed with $\T$ after an application of $\Fn\Ttilde$.
Moreover, $\F$-formulas are handled by $\F$-decide. In semantical terms of counter model
construction, given the formula $\F A$,  rule $\F$-decide
decides if in the next state of knowledge the formula $A$ will be  a known or an unknown fact.
Similarly, rule $\That$-decide decides the semantical status of the antecedent $A$ for formulas of the
kind $\That(A\to B)$. By the rules of the calculus, if $\T(A\to B)$ becomes  $\Ttilde(A\to
B)$, then in the subsequent sets the sign of $A\to B$ can only be  $\That$ or $\Ttilde$
and rules $\That$-decide and $\Fn\Ttilde$ are the only rules where the sign can be switched.  
Rule $\Fn\Ttilde$ is the only non-invertible rule of \npwt, thus to devise a
complete strategy that does not require backtracking it is sufficient that rule
$\Fn\Ttilde$ is applied when no other rule is applicable. Finally,
the following features of \npwt allow us to prove the termination:  every node of the proof
table contains at least a $\Fu$-formula and 
 an application of $\Fn\Ttilde$ increases the number of $\T$-signed
formulas. Also note that, if $\Fn\Ttilde$ is applied only if no other rule is applicable, which
is the way we want to use $\Fn\Ttilde$, then  the
premise is always instantiated to a set containing at least an $\Fu$-atomic formula.
This implies that  in the conclusion  at least one  new $\T$-signed atomic formula is
introduced.  
Summarizing, in spite of the rightmost set in the conclusion of  rules $\T\to$, $\F-decide$ and
$\That-decide$, any sequence of application of rules ends in a set containing signed
atomic formulas only and we do not have infinite loops.

\begin{remark}
The presentation of the calculus is  without efficiency in
mind. We could exploit the meaning of signs $\Fu$ and $\Tn$ to introduce more rules and
checks that allow us to reduce the size of the proofs. As an example we could extend the
notion of inconsistent set by adding to those given above the following conditions:  
$ \{\F A, \That A\} \subseteq  S$;
$ \{\F A, \Ttilde A\} \subseteq  S$;
$ \{\T A, \Fn A\} \subseteq  S$;
$ \{\T A, \Ttilde(A\to B)\} \subseteq  S$;
$ \{\T A, \That (A\to B)\} \subseteq  S$;
$ \{\Tn A, \Ttilde (A\to B)\} \subseteq  S$;
$ \{\Tn \false, \Fn A\}\subseteq  S$; 
$ \{\Tn A, \Fn A\} \subseteq  S$;
$ \{\Fu A, \Ttilde (A\to B)\} \subseteq  S$;
$ \{\Fu \false, \Fn A\} \subseteq  S$.
This would avoid to perform useless deduction steps all ending in inconsistent sets.
The rule $\Fn\Ttilde-opt$ given 
in Figure~\ref{fig:non-invertible-rule:optimized} is an optimization of rule $\Fn\Ttilde$
of Figure~\ref{fig:non-invertible-rule}.
\begin{figure}[t]
  {
    \[
    \begin{array}{| c |}
      \hline
      \Tab
      {
        S, \Ttilde(A_1\to B_1),\dots,\Ttilde(A_n\to B_n),\Fn C_{n+1} ,\dots, \Fn C_u 
      }
      { 
        S_c,V_1,S_{\F}|\dots|S_c,V_n,S_{\F}|S_c,S_{\Ttilde},V_{n+1}|\dots|S_c,S_{\Ttilde},V_u
      }
      { 
        \Fn\Ttilde-opt
      }
      \\\\
      \text{where:}\  \ 
      S_{\F}=\{\F C_{n+1} ,\dots, \F C_u \},\ \ 
      S_{\Ttilde}=\{\Ttilde(A_1\to B_1),\dots,\Ttilde(A_n\to B_n)\},\hspace*{\fill}\\[1ex]
      \text{for }j=1,\dots,n,\hspace*{\fill}\\[1ex] 
      V_j = \{ \Ttilde(A_1\to B_1),\dots,\Ttilde(A_{j-1}\to B_{j-1}),
      \Fu A_j,\Tn B_j,
      \That(A_{j+1}\to B_{j+1}),\dots,\That(A_n\to B_n)\},
      \hspace*{\fill}\\[1ex]
      \text{for }j=n+1,\dots,u,\ 
      V_j = \{\Fn C_{n+1},\dots,\Fn C_{j-1},\Fu C_j, \F
      C_{j+1},\dots,\F C_u \} \text{ and }\hspace*{\fill}
      \\[1ex]
      S_c = \{ \T A | \T A \in S \} \cup \{\T A | \Fu A\in S\}\cup
      \{\T A | \Tn A\in S\};\hspace*{\fill}       
      \\[1ex]
      \hline
    \end{array}
    \]
  }   
  \caption{The non-invertible rule $\Fn\Ttilde$.}
  \label{fig:non-invertible-rule:optimized}
\end{figure}  
This rule avoids useless applications of $\F$-decide and $\That$-decide which are rules
that introduce branching points.
\end{remark}

\section{Correctness}
To obtain the correctness of \npwt with respect to Dummett logic, we
proceed by showing that the existence of a proof table
for $\{\Fu A\}$, implies the validity of $A$ in Dummett logic. 
The main step consists in establishing  that the
rules of the calculus preserve  
realizability:
\begin{proposition}
  For every rule of \npwt, if a world $\alpha$ of a model $\uK=\kripke$ realizes the
  premise, then $\alpha$ realizes at least one of the conclusions.  
\end{proposition}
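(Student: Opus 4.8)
The plan is to argue by a case analysis on the rule instance applied. In each case I assume that $\alpha\realizza$ (premise), unfold every signed formula occurring there into an assertion about $\forza$ at $\alpha$ and at the worlds $\beta>\alpha$ (using the eight defining clauses of $\realizza$ together with persistence), and then exhibit a conclusion that is realized. For all rules except the non-invertible $\Fn\Ttilde$ the realizing world turns out to be $\alpha$ itself; for $\Fn\Ttilde$ it is a suitable successor of $\alpha$, so there the statement is read as ``some world realizes a conclusion''.

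The invertible rules split into three groups. The rules $\T\land$, $\T\lor$, $\Fu\to$ merely restate the forcing clauses for $\land,\lor,\to$ at $\alpha$ (plus persistence for the $\Tn$/$\Fu$-part of the conclusion of $\Fu\to$). The rules $\F$-decide, $\That$-decide and $\T\to$ are handled by the dichotomy: either the relevant antecedent $A$ is forced at every $\beta>\alpha$, so $\alpha\realizza\Tn A$ and the conclusion containing $\Fu A$ (and $\Tn B$) is realized, or $A$ fails at some $\beta>\alpha$, so $\alpha\realizza\Fn A$ and the conclusion containing $\Fn A$ (respectively $\Ttilde(A\to B)$) is realized; for $\T\to$ one first splits on whether $\alpha\forza B$. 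The only invertible case with real content is $\Fu\lor$ (and, more mildly, $\Fu\land$), and it is exactly here that linearity of $\langle P,\leq\rangle$ is used: from $\alpha\realizza\Fu(A\lor B)$ we get $\alpha\nonforza A$, $\alpha\nonforza B$ and $\beta\forza A\lor B$ for every $\beta>\alpha$; if we had $\beta_1>\alpha$ with $\beta_1\nonforza A$ and $\beta_2>\alpha$ with $\beta_2\nonforza B$, then one of $\beta_1,\beta_2$ is $\leq$ the other and persistence of the disjunct forced at the smaller world contradicts the failure at the larger, so $\alpha\realizza\Tn A$ or $\alpha\realizza\Tn B$ and the corresponding conclusion ($S,\F B,\Fu A$ or $S,\F A,\Fu B$) is realized.

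The main obstacle is $\Fn\Ttilde$, whose conclusions describe the \emph{next} world: from $\alpha\realizza$ (premise) we only know that each $A_i$ ($i\leq n$) and each $C_l$ ($l>n$) fails at \emph{some} world above $\alpha$, whereas realizing a conclusion needs one world $\beta$ at which \emph{all} of them fail and above which, for the chosen index, the antecedent --- and hence, via $\alpha\forza A_j\to B_j$ and persistence, the consequent --- is forced everywhere. My plan is: consider the finitely many witnessing worlds; by linearity the ``failure segments'' $D_i=\{\gamma>\alpha:\gamma\nonforza A_i\}$ ($i\leq n$) and $E_l=\{\gamma>\alpha:\gamma\nonforza C_l\}$ ($l>n$) are pairwise nested; let $F$ be the $\subseteq$-smallest of them --- say $F=D_j$, the case $F=E_l$ being symmetric --- and let $\beta$ be the greatest element of $F$. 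Then $\beta>\alpha$, so persistence gives $\beta\realizza S_c$; from $F\subseteq D_i$ we get $\beta\nonforza A_i$, hence $\beta\realizza\That(A_i\to B_i)$ for $i\neq j$ (using $\beta\forza A_i\to B_i$); from $F\subseteq E_l$ we get $\beta\realizza S_{\F}$; and maximality of $\beta$ in $D_j$ forces $A_j$, hence $B_j$, at every $\gamma>\beta$, so $\beta\realizza\Fu A_j$ and $\beta\realizza\Tn B_j$. Thus $\beta$ realizes the $j$-th conclusion. The one delicate point --- where I expect the actual work --- is the existence of a greatest element of $F$: an arbitrary linear order need not have one. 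I would remove this obstruction by restricting to finite models, which costs nothing because $\dum$ has the finite model property (so the correctness theorem for $\{\Fu A\}$ only needs this statement over finite models), or by keeping ``all relevant failure segments have a maximum'' as an invariant of the models produced in the downward induction of the correctness argument.
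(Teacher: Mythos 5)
Your proof is correct and follows essentially the same route as the paper's: for the invertible rules you unfold the signs exactly as the paper does in its sample case $\Fu\land$, and for $\Fn\Ttilde$ your ``greatest element of the $\subseteq$-smallest failure segment'' is precisely the paper's choice of $\beta_i=\min\{\beta_1,\dots,\beta_u\}$, where each $\beta_k$ is the last world not forcing the $k$-th antecedent. The one place you go beyond the paper is worth keeping: the paper simply asserts that $\alpha\realizza\Fn A_i$ yields some $\beta_i$ with $\beta_i\realizza\Fu A_i$, which can fail in infinite models (a failure segment need not have a maximum), so your explicit appeal to the finite model property, or to an invariant guaranteeing such maxima, repairs a step the paper leaves unjustified.
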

\begin{proof}
We consider only two rules:\\
Rule $\Fu\land$. Let us suppose that $\alpha\realizza \Fu(A\land B)$. By the meaning of
$\Fu$ we have that $\alpha\nonforza A\land B$ and for every world $\beta\in P$, if
$\alpha<\beta$, then $\beta\forza A\land B$. This implies that $\alpha\nonforza A$ or
$\alpha\nonforza B$ and $\beta\forza A$ and $\beta\forza B$. We have two main cases on $A$:
if $\alpha\nonforza A$ holds, then, since $\beta \forza A$, we get $\alpha\realizza \Fu
A$. Moreover, from $\beta\forza B$,  $\alpha\realizza\Tn B$ follows; if $\alpha\forza A$
holds, then $\alpha\realizza \T A$ holds and, by $\alpha \nonforza A\land B$, 
we have $\alpha\nonforza B$ and since
$\beta\forza B$ it  follows that $\alpha\realizza \Fu B$ holds;

\noindent
Rule $\Fn\Ttilde$. The correctness of the rule can be explained
following~\cite{AveFerMig:99}.
Let us suppose
that $\alpha\realizza S, \Ttilde(A_1\to B_1),\dots,\Ttilde(A_n\to
B_n),\Fn A_{n+1},\dots, \Fn A_u$. By the meaning of $\Ttilde$ we have
that $\alpha \realizza \Fn A_1,\T(A_1\to B_1),\dots, \Fn A_n,
\T(A_n\to B_n)$. 
Thus there exists $\beta_i$ such
that $\alpha<\beta_i$ and $\beta_i\realizza \Fu A_i$, for $i=1,\dots,u$. 
We notice that $\beta_i$ realizes all the $\T$ formulas
in $S$ and $\beta_i\realizza \T C$ if $\Fu C\in S$. 
Moreover, if $\beta_i=\min\{\beta_1,\dots,\beta_u\}$, 
then $\beta_i\realizza \F A_1,\dots,\F A_{i-1},\Fu A_i, \F A_{i+1},
\dots,\F A_u$. By the meaning of $\T, \That$ and $\Tn$ we conclude that  
if $i\in \{1,\dots,n\}$, then 
\[
\beta_i\realizza\{\That(A_1\to B_1),\dots,\That(A_n\to B_n), \F
A_{n+1},\dots \F A_u\}
%\setminus \{\That(A_i\to B_i)\})
\cup \{\Fu A_i,\Tn B_i\},
\] 
otherwise
%
%, $i\in\{n+1,\dots u \}$, 
%
$
\beta_i\realizza \{\That(A_1\to B_1),\dots,\That(A_n\to B_n), 
\F A_{n+1},\dots,\F A_u\}
%\setminus \F A_i)
\cup \{\Fu A_i \}.
$  
\end{proof}

From the proposition above we get
\begin{theorem}
  If there exists a closed proof table for $A$, then $A$ is valid
  in Dummett logic.
\end{theorem}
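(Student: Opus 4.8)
The plan is to argue by contraposition: assuming $A$ is not valid in Dummett logic, I will exhibit a world of some model that realizes $\{\Fu A\}$, and observe that this rules out the existence of a closed proof table for $\{\Fu A\}$.

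The core step is a lemma: for every closed proof table with root $S$, every model $\uK=\kripke$ and every $\alpha\in P$ satisfy $\alpha\nonrealizza S$. I would prove this by induction on the height of the proof table. If the height is zero, $S$ is a leaf, hence an inconsistent set, and Proposition~\ref{prop:contraddittorieta} gives $\alpha\nonrealizza S$. For the inductive step, suppose for contradiction that some $\alpha$ realizes the root $S$, and let $S_1\mid\dots\mid S_k$ be the conclusions of the rule instantiated at the root. By the previous proposition (every rule of \npwt preserves realizability) we get $\alpha\realizza S_i$ for some $i$; but $S_i$ is the root of a closed proof table of strictly smaller height, so the induction hypothesis yields $\alpha\nonrealizza S_i$, a contradiction. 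Applying the lemma to the assumed closed proof table for $\{\Fu A\}$, no world of any model realizes $\{\Fu A\}$.

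To finish, assume $A$ is not valid in Dummett logic. Since Dummett logic has the finite model property (it is complete with respect to finite linearly ordered Kripke models), there is a finite model $\uK=\kripke$ with $\rho\nonforza A$. By persistence the set $I=\{\gamma\in P\mid \gamma\nonforza A\}$ is a nonempty initial segment of the finite chain $\langle P,\leq\rangle$, hence has a maximum $\alpha$. Then $\alpha\nonforza A$, so $\alpha\realizza\F A$; moreover every $\beta\in P$ with $\beta>\alpha$ lies outside $I$, i.e.\ $\beta\forza A$, so $\alpha\realizza\Tn A$. By the definition of $\realizza$ this means $\alpha\realizza\Fu A$, contradicting the conclusion of the previous paragraph. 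Hence $A$ is valid in Dummett logic, which is the contrapositive of the statement.

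I expect the only delicate point to be this last step: from an arbitrary counter-model one only reads off a world realizing $\F A$, and in an infinite ascending chain the initial segment of worlds not forcing $A$ may fail to have a maximum, so one genuinely needs either a finite counter-model or a small piece of model surgery adding a suitable top world in order to obtain a world realizing $\Fu A$. Everything else is the routine induction on the proof table driven by Proposition~\ref{prop:contraddittorieta} and the realizability-preservation proposition.
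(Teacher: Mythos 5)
Your proof is correct and follows essentially the same route the paper intends: the paper derives the theorem immediately from the realizability-preservation proposition together with Proposition~\ref{prop:contraddittorieta}, which is exactly the induction on the closed proof table that you spell out. Your extra care in passing from a mere counter-model world realizing $\F A$ to one realizing $\Fu A$ (via the finite model property and taking the maximum of the initial segment of worlds not forcing $A$) correctly fills in a step the paper leaves implicit.
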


\section{Completeness}
We describe a procedure using the rules of the calculus to return a proof or a
counter model for a given set of signed formulas $S$.

In the following we sketch the recursive procedure \ftab(S). Given a
set $S$ of formulas, $\ftab(S)$ 
returns either a 
closed proof table for $S$ or $\NULL$ (if there exists a model realizing
$S$). 
To describe $\ftab$ we use the following definitions and notations. 
We call $\alpha$-rules and $\beta$-rules the rules of
Figure~\ref{fig:invertible-rules} with one conclusion and with two
conclusions, respectively. The $\alpha$-formulas and $\beta$-formulas are 
the kind of the signed formulas in evidence in the premise of the
$\alpha$-rules and $\beta$-rules, respectively (e.g. $\T(A\land B)$ is an
$\alpha$-formula and $\T(A\lor B)$ is a $\beta$-formula).  
Let $S$ be a set of
formulas, let $H\in S$ be an $\alpha$ or $\beta$-formula. 
With $Rule(H)$ we denote the rule corresponding to $H$ in
Figure~\ref{fig:invertible-rules}.
Let $S_1$ or 
$S_1 \EE S_2$ be the nodes of the proof tree obtained by applying to
$S$ the rule $Rule(H)$.  
If $Tab_1$ and $Tab_2$ are closed proof tables for $S_1$
and $S_2$ respectively, then $\NewTab{S}{Tab_1}{Rule(H)}$ or $\NewTab{S}{Tab_1 \EE
  Tab_2}{Rule(H)}$ denote the closed proof table for $S$ defined in the obvious
way. Moreover,  
$\Rcal_i(H)$ ($i=1,2$) denotes the set containing the formulas of
$S_i$ which replaces $H$. For
instance:\\ 
$\Rcal_1(\T (A\land B)) =\,\{\, \T A, \T B \,\}$,\\
$\Rcal_1(\T (A\lor B)) =\,\{ \T A\}$,  $\Rcal_2(\T (A\lor B)) =\,\{\T B \}$,\\
In the case of $\Fn\Ttilde$ we generalize the above notation.
Let $S_{\Fn\Ttilde}$ be the set of all the 
$\Fn$-formulas of $S$. Let $S_1|\dots|S_n$ be the nodes of the proof
tree obtained by applying to $S$ the rule $\Fn\Ttilde$. If
$Tab_1\dots,Tab_n$ are closed proof tables for
$S_1,\dots,S_n$, respectively, then $\NewTab{S}{Tab_1 \EE\dots\EE
  Tab_n}{\Fn\Ttilde}$ is the closed proof table for $S$. With $\Rcal_i(S_{\Fn\Ttilde})$ we denote the
set of formulas that replaces the set $S_{\Fn\Ttilde}$ in the $i$-th
conclusion of $\Fn\Ttilde$. For example, given
$S_{\Fn\Ttilde}=\{\Ttilde (A_1\to B_1),\Fn A_2,\Fn A_3\}$, 
$\Rcal_2(S_{\Fn\Ttilde})=\{\That(A_1\to B_1),\Fu A_2, \F A_3\}$.\\ 

\noindent
{\sc Function} \ftab(S)\\
\newcounter{passo}
\setcounter{passo}{1} 
\newcounter{counter:firstStep}
\setcounter{counter:firstStep}{\value{passo}}                  
{\bf   \arabic{passo}.} If $S$ is an inconsistent set, then $\ftab$ returns the proof $S$;
\\[1ex]
\addtocounter{passo}{1}
{\bf \arabic{passo}.} If an $\alpha$-rule applies to $S$, then let $H$ be a
$\alpha$-formula of $S$. If $\ftab((S\setminus\{H\})\cup \Rcal_1(H))$ returns a
proof $\pi$, then $\ftab$ returns the proof $\NewTab{S}{\pi}{Rule(H)}$,
otherwise $\ftab$ returns \NULL;\\[1ex]
\addtocounter{passo}{1}
{\bf \arabic{passo}.}  If a $\beta$-rule applies to $S$, then let $H$ be a
$\beta$-formula of $S$. 
Let $\pi_1=\ftab((S\setminus \{H \})\cup\Rcal_1(H))$
and $\pi_2=\ftab((S\setminus \{H\})\cup\Rcal_2(H))$. 
If $\pi_1$ or $\pi_2$ is \NULL, then 
$\ftab$ returns \NULL,  otherwise
$\ftab$
returns $\NewTab{S}{\pi_1\EE \pi_2}{Rule(H)}$;
\\[1ex]  
\addtocounter{passo}{1} 
\newcounter{counter:fn}
\setcounter{counter:fn}{\value{passo}} 
{\bf \arabic{passo}.}  If the rule $\Fn\Ttilde$ applies to $S$, then 
let $S_{\Fn\Ttilde}=\{
\calS A\in S | \calS\in\{\Ttilde,\Fn\}\}$ and $n=|S_{\Fn\Ttilde}|$.
If there exists $i\in\{1,\dots,n\}$, such that 
$\pi_i=\ftab((S\setminus S_{\Fn\Ttilde} )_c\cup\Rcal_i(S_{\Fn\Ttilde}))$ is \NULL, then 
$\ftab$ returns \NULL.  Otherwise $\pi_1,\dots,\pi_n$ are proofs 
and  $\ftab$
returns $\NewTab{S}{\pi_1\EE \dots\EE\pi_n}{\Fn\Ttilde}$;\\[1ex]
\addtocounter{passo}{1}                
\newcounter{counter:lastStep}
\setcounter{counter:lastStep}{\value{passo}}                  
{\bf \arabic{passo}.} If none of the previous points apply, then $\ftab$ returns
\NULL.\\[1ex]
{\sc end function $\ftab$}.\\

We emphasize that function $\ftab$ respects a particular sequence
in the application of the rules: $\Fn\Ttilde$ is applied if no
other rule is applicable. 
As a result no backtracking step is necessary.
Moreover, to decide $A$, the function call $\ftab(\{\Fu A \})$ is performed. By the rules
handling $\Fu$-formulas we have that when rule $\Fn\Ttilde$ is applied the formal
parameter $S$ contains at least a  $\Fu$-atomic formula and by rule $\Fn\Ttilde$,
every actual parameter of the recursive call performed in Step~\arabic{counter:fn},
contains a $\Fu$-formula and a  $\T$-atomic formula not occurring in $S$. 
This implies that every
application of $\Fn\Ttilde$ introduces a new $\T$-atomic formula and thus we can have at
most $n$ applications of rule $\Fn\Ttilde$, where  $n$ is the number of propositional
variables in $A$. Since between two applications of rule $\Fn\Ttilde$ we  
cannot have an infinite sequence of rule applications, we conclude that the function call
$\ftab(\{\Fu A\})$ always terminates. More formally, we can define a binary relation $\prec$ 
on sets of formulas defined as follows: 
$S'\prec S$ iff (i) the set of  $\T$-atomic formulas in $S'$ includes the set of atomic
formulas in $S$, or (ii) the set of  $\T$-atomic formulas in $S'$ coincides with the set of atomic
formulas in $S$ and the number of connectives in $S'$ is lower than in $S$ or $(iii)$ 
 the sets $S$ and $S'$ contain the same $\T$-atomic formulas, the same number of
 connectives  and they differ for the sign
 of a single formula $A$ such that $\T A\in S$ and $\Ttilde A\in S'$ or $\Ttilde A\in S$ and $\That
 A\in S'$ or $\F A\in S$ and $\Fu A\in S'$ or $\F A\in S$ and $\Fn A\in S'$.
 By inspecting the rules of the calculus it follows that  every recursive call is
 performed on a actual parameter $S'$ such that $S'\prec S$. By
 definition of $\prec$ and the fact that the sets only contain subformulas of the formula
 to be decided, every chain of recursive calls on non-inconsistent sets ends in a
 set only containing signed atomic formulas. This implies that function $\ftab$ terminates.

In order to get the completeness of $\ftab$, in the following it is proved
that given a set of 
formulas $S$, if the call of $\ftab(S)$ returns $\NULL$, then there is enough
information to build a model \mbox{$\uK=\kripke$} such that
$\rho\realizza S$.

\begin{theorem}[Completeness of \ftab]
  \label{theo:completeness}
  Let $A$ be a formula. If $A$ is valid in propositional Dummett logic, then $\ftab(\{\Fu
  A\})$ returns a proof.
\end{theorem}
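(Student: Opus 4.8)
The plan is to prove the contrapositive: if $\ftab(\{\Fu A\})$ does \emph{not} return a proof, i.e.\ it returns $\NULL$, then $A$ is not valid in Dummett logic, by exhibiting a linearly ordered Kripke model $\uK=\kripke$ with $\rho\realizza\Fu A$ (hence $\rho\nonforza A$). Since termination of $\ftab$ has already been established via the relation $\prec$, the recursion tree of the call $\ftab(\{\Fu A\})$ is finite, and when the top call returns $\NULL$ there is, by the structure of Steps~\arabic{counter:firstStep}--\arabic{counter:lastStep}, a distinguished branch of that recursion tree along which every call returns $\NULL$; I would extract the model from the sequence of formula sets along this branch.

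First I would isolate the ``failed'' branch. At each recursive call $\ftab(S)$ returning $\NULL$, $S$ is not inconsistent, and exactly one of the following happens: an $\alpha$-rule was applied and its unique child returned $\NULL$; a $\beta$-rule was applied and (at least) one chosen child returned $\NULL$; the rule $\Fn\Ttilde$ was applied and some $i$-th child returned $\NULL$; or no rule applies and $S$ consists of signed atomic formulas only. Following the $\NULL$-returning child at each step gives a finite sequence $S=S_0, S_1, \dots, S_m$ of non-inconsistent sets ending in a set $S_m$ to which no rule applies. The key structural fact I would extract, by inspecting the rules, is: along this branch the applications of $\Fn\Ttilde$ partition the sequence into \emph{blocks}, within a block only the rules of Figure~\ref{fig:invertible-rules} are used (so the world stays ``the same''), and each application of $\Fn\Ttilde$ moves us to ``the next world''. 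The number of $\Fn\Ttilde$ applications is at most $n$, the number of propositional variables, which will control the depth of the model.

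Next I would build the model. Let $\rho_0,\dots,\rho_k$ ($k\le n$) be one world for the initial block and one for each block created after an application of $\Fn\Ttilde$; order them linearly $\rho_0<\rho_1<\dots<\rho_k$, with $\rho=\rho_0$ the root. For the forcing of atoms at $\rho_j$ I would take: $\rho_j\forza p$ iff $\T p$ occurs in some $S_\ell$ belonging to block $j$ or to an earlier block, together with the atoms forced by persistence. (One must check this respects persistence and the constraints on $\true,\false$: non-inconsistency of the sets rules out $\T\false$ and clashes $\{\T p,\F p\}$, $\{\T p,\Fu p\}$ within a block.) The heart of the argument is then the \emph{model-existence lemma}: by induction on the $\prec$-rank (equivalently, on the position in the branch, innermost first), every signed formula $H\in S_\ell$ sitting in block $j$ is realized by $\rho_j$, i.e.\ $\rho_j\realizza H$. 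For atoms this is the definition of the model together with the reading of $\Fu$, $\Fn$, $\Tn$ in terms of the later worlds $\rho_{j+1},\dots,\rho_k$; for compound formulas one uses, for each rule, that the relevant child set along the branch has all its formulas already realized by the appropriate world, and that the realizability clauses for $\land,\lor,\to$ and for $\Ttilde,\That$ match exactly the way the rule decomposed $H$. Applying this lemma to the root call gives $\rho\realizza\Fu A$, hence $\rho\nonforza A$ and $A$ is not valid.

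The main obstacle, I expect, is the bookkeeping around $\Fn\Ttilde$ and the signs $\Tn,\Fn,\Fu,\That,\Ttilde$: one must verify that the single world $\rho_{j+1}$ chosen to ``witness'' the $\Fn$/$\Fu$ obligations of block $j$ simultaneously realizes \emph{all} of $S_c$ together with the chosen conclusion $V_i\cup S_\F$ (resp.\ $S_{\That}\cup V_i$) of $\Fn\Ttilde$ — this is precisely where the correctness-style reasoning of the preserved-realizability proposition (the $\beta_i=\min$ argument, following~\cite{AveFerMig:99}) gets reused in the converse direction, and where the choice ``$\Fn\Ttilde$ is applied only when no other rule applies'' is needed so that the premise of $\Fn\Ttilde$ contains an $\Fu$-\emph{atomic} formula, guaranteeing the next world genuinely differs from the current one and that persistence of atoms is consistent with the bound $k\le n$. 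A secondary point to get right is the base case $S_m$: since no rule applies and $S_m$ is non-inconsistent, $S_m$ contains only signed atoms and no clash, so the obvious valuation at the last world realizes it; everything else follows by the induction downward along the branch.
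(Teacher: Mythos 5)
Your proposal is correct and follows essentially the same route as the paper: the paper also argues by contraposition, inducts on the number of nested recursive calls (which amounts to walking your \NULL-returning branch innermost-first), takes the premises of the $\Fn\Ttilde$ applications plus the terminal atomic set as the worlds, defines forcing from the $\T$-atoms, and discharges exactly the verification obligation you flag as the main obstacle (that the new root $\rho$ realizes $S_c$, the $\Ttilde$- and $\Fn$-formulas, and the leftover $\Fu$/$\Tn$-formulas, given that the inductively obtained root $\rho'$ realizes the chosen conclusion of $\Fn\Ttilde$). Your packaging differs only cosmetically — you extract the failed branch and build the linear model $\rho_0<\dots<\rho_k$ in one pass rather than prepending a new root at each induction step — but the model, the key realizability lemma, and the case analysis coincide with the paper's.
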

\begin{proof}
To prove the theorem,  we consider a set $S$ of formulas and we prove that
if $\ftab(S)$ returns $\NULL$,  then there exists a Kripke model $\uK=\kripke$ such that 
$\rho\realizza S$. We get the statement of the theorem  by setting $S=\{\Fu
A\}$ and using the contrapositive.

We proceed by induction on the number of nested recursive calls. It is worth to note that
the construction of $\uK$ uses the sets of formulas involved in Step~4 or~5 of function
$\ftab$ as elements of $P$.  

\noindent
{\em Basis:} There are no recursive calls. Then Step~\arabic{counter:lastStep} has been
performed. We notice that $S$ 
is not inconsistent (otherwise Step~\arabic{counter:firstStep} would have been
performed). 
Indeed, 
$S$ only contains atomic formulas
signed with $\T$, $\Tn$, $\Fu$.
It is easy to prove that the model
$\uK=\kripke$, where $\rho=S$, $P=\{\rho\}$, $\rho\leq\rho$ and $\rho\forza p$
iff $\T p\in S$, realizes $S$.\\
                               
\noindent
{\em Step:}  By induction hypothesis we assume that the proposition holds for all sets
 $S'$ such that $\ftab(S')$ requires less than $n$ recursive calls. 
We prove the proposition holds for a set 
$S$ such that $\ftab(S)$ requires $n$ recursive calls by inspecting
all the possible cases where the procedure returns the 
$\NULL$ value. 

\noindent
{\em \NULL value returned performing Step~4}. 
By induction hypothesis there exists a model $\uK'=\langle P',\leq',\rho',\forza' \rangle$ such
that $\rho'\realizza (S\setminus
S_{\Fn\Ttilde})_c\cup \Rcal_j(S_{\Fn\Ttilde})$.
We have two cases:  if  the $j$-th formula in the enumeration of $S_{\Fn\Ttilde}$ is
$\Ttilde(A_j\to B_j)$, then 
$\Rcal_j(S_{\Fn\Ttilde})=$$(\{\That (A\to B) | \Ttilde(A\to B)\in
S_{\Fn\Ttilde}\}\setminus$$\{\That(A_j\to B_j)\})\cup \{\Fu A_j,\Tn B_j \}\cup\{\F C | \Fn C\in S_{\Fn\Ttilde} \}$. 
By $\rho'\realizza \Fu A_j,\Tn B$, we have $\rho\realizza \That(A_j\to B_j)$ and
$\rho\nonforza A_j$. We also have that for every $\That(A\to B)\in
\Rcal_j(S_{\Fn\Ttilde})$, $\rho'\nonforza' A$ and for every $\F C\in
\Rcal_j(S_{\Fn\Ttilde})$, $\rho'\nonforza C$. 
We build the following structure $\uK=\langle P,\leq,\rho,\forza \rangle$ such that 
\[
\begin{array}[t]{rcl}
P & = & P'\cup\{\rho\},\\ 
\leq & = & \leq'\ \cup\ \{(\rho,\alpha)|\alpha\in P'\},\\ 
\forza & = & \forza'\ \cup\ \{(\rho,p)| \T p\in S\},
\end{array}
\]
where we set $\rho=S$.
Since $\uK'$ is a Dummett model realizing $(S\setminus S_{\Fn\Ttilde})_c$, 
it follows that $\uK$ is a Dummett
model. As a matter of fact, $\rho'$ is the immediate successor of
$\rho$ and $\T A\in S$ implies $\T A\in (S\setminus S_{\Fn\Ttilde})_c$,
thus the forcing relation is preserved. 
This also implies that: 
$\rho\nonforza A_j$ holds, that, together with the facts $\rho'\forza' A_j\to B_j$ and
$\rho'\nonforza' A_j$, 
implies $\rho\realizza \Ttilde(A_j\to B_j)$; 
for every $\That(A\to B)\in
\Rcal_j(S_{\Fn\Ttilde})$, $\rho\nonforza A$ holds, that together with the facts
$\rho'\forza'A\to B$ and 
$\rho'\nonforza' A$, 
implies that $\rho\realizza \Ttilde(A\to B)$; 
 for every $\F C\in
\Rcal_j(S_{\Fn\Ttilde})$, $\rho\nonforza C$ holds, that together with the fact
$\rho'\nonforza C$ implies 
$\rho\realizza\Fn C$. 
Thus we have proved that $\rho\realizza S_{\Fn\Ttilde}$.
As regard the other formulas in $S$: if $\Fu A\in S$, then $A$ is an atomic formula and 
$\T A\in (S\setminus S_{\Fn\Ttilde})_c$. 
Since $S$ is not inconsistent (otherwise Step~1 would have been performed)  $\T
A\not\in S$ holds, this implies $\rho\nonforza A$; if 
$\Tn A\in S$ holds, then $\T A\in (S\setminus S_{\Fn\Ttilde})_c$ and hence
$\rho'\forza A$.  
Summarizing we conclude that $\rho\realizza S$.
\end{proof}
We remark that  following the construction of Theorem~\ref{theo:completeness}, 
it is straightforward how to modify function \ftab\ to get a function returning a proof or a
counter model. In particular, the proof puts in evidence  that a counter
model can be extracted by any branch of a tableau proof ending in a non-contradictory set
to which no further rule is applicable. By the fact that every application of $\Fn\Ttilde$
introduces in the conclusion a new propositional variable, it follows that if a formula $A$
is realizable, then $\ftab$ returns a counter model for $A$ having $n+1$ elements at most, where
$n$ is the number of propositional variables of $A$.
Finally, note that the elements of the counter model $\uK$ 
are sets of formulas only with the aim to simply the discussion in next section.

\section{Handling $\F$-formulas}
Now we start to discuss a calculus handling formulas signed with $\T$ and $\F$ only. We
present our ideas in two steps. First we introduce  calculus $\set$ 
having rules to handle the main connective of $\F$-formulas, this allow us to get rid of
$\That$-decide rule. Then we go a step further to
get our  final calculus $\et$. 

To handle $\F$-formulas by rules based on the main connective, 
it is necessary to introduce a machinery to determine,
given $\Ttilde(A\to B)$, if $A$ is forced.
Such a machinery is based on a notion similar to the boolean
satisfiability of a formula  in a model. 
Let $S$ be a non-inconsistent set of signed formulas and let $A$ be a formula,  we write
$S\models A$ iff $\T A\in S$, 
$\Ttilde A\in S$  
or one of the following conditions holds: 
(i)   $A=\true$;  
(ii) $A=B\land C$, $S\models B$ and $S\models C$; 
(iii)  $A=B\lor C$,  $S\models B$ or $S\models C$; 
(iv)   $A=B\to C$ and $\Fn A\not\in S$ and  if $S\models B$ then $S\models C$.
\\
We are interested to check if $S\models A$ holds when 
$\Fn\Ttilde$ is the only rule applicable to $S$.
The relation $\models$ aims to express via syntax the semantical notion of
realizability. In other words,
we are looking for a syntactical checking for  forcing and
non-forcing of a formula in a world of the
Kripke model built in the proof of Theorem~\ref{theo:completeness}. 
Relation $\models$ allows us to express such  a checking via the way the formulas are
handled in  the construction of the counter model. The construction has the properties 
suggesting that a new calculus managing  $\T$ and $\F$-formulas and a syntactical
checking based on $\models$ can be given. We start to show a relation between 
$\forza$ and $\models$ in the construction given in Theorem~\ref{theo:completeness}:
\begin{lemma}
  \label{lemma:persistenza}
  Let $S$ be a set occurring in the construction of the model
  $\uK=\kripke$ in the proof of Theorem~\ref{theo:completeness} and let 
  $\alpha\in P$. Let us suppose that $\alpha\realizza S$. Then: 
  \\
  (i)  if $\calS A\in S$,  with $\calS A\in \{\T,\Ttilde,\That \}$, then for every
  $\beta\in P$ such that $\alpha\leq\beta$, $\beta\models A$ and 
  \\
  (ii) if $\calS A \in S$, with $\calS A \in \{\F,\Fu,\Fn\}$, then $\alpha\not\models A$. 
\end{lemma}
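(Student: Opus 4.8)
The plan is to prove both parts simultaneously by induction on the structure of the formula $A$, using the fact (to be extracted from the proof of Theorem~\ref{theo:completeness}) that the model $\uK$ is built so that each set $S$ occurring in the construction is either a leaf (only atomic formulas signed $\T$, $\Tn$, $\Fu$) or arises as a $\rho$ whose forcing is determined by $\{(\rho,p)\mid \T p\in S\}$ together with the forcing of its immediate successor $\rho'$, which realizes $(S\setminus S_{\Fn\Ttilde})_c \cup \Rcal_j(S_{\Fn\Ttilde})$. Throughout I assume $\alpha\realizza S$, so by Proposition~\ref{prop:contraddittorieta} the set $S$ is non-inconsistent and the definition of $\models$ applies to it.

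First I would handle part (i). The base case is $A$ atomic: if $\T A\in S$ then $S\models A$ by definition, and by persistence of forcing and the fact that each later set $\beta$ in the construction contains $\T A$ (since $\T$-formulas are copied into $S_c$ and never deleted along the spine), we get $\beta\models A$ for all $\beta\geq\alpha$; the cases $A=\true$ and $\calS A\in\{\Ttilde,\That\}$ with $A$ atomic do not arise since those signs are used only for implications. For the inductive step with $\calS = \T$: if $A=B\land C$, then $\T B,\T C$ appear in the set obtained by applying $\T\land$, which is the next set in the recursion, so the induction hypothesis gives $\beta\models B$ and $\beta\models C$ for all $\beta\geq\alpha$, hence $\beta\models A$; disjunction is analogous via $\T\lor$ (one disjunct is realized, hence forced persistently, so $\models$ holds). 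If $A=B\to C$, I must show $\Fn A\notin S$ and that $\beta\models B$ implies $\beta\models C$: since $\alpha\realizza\T(B\to C)$, $\alpha$ cannot also realize $\Fn(B\to C)$ (that would require a successor not forcing $B\to C$, contradicting persistence), and applying $\T\to$ splits into the three conclusions — in each, either $\T C$ is present (giving $\models C$ directly), or $\Fu B,\Tn C$ is present (so $\alpha\not\models B$ by part (ii)'s inductive instance at subformula $B$, making the implication vacuous), or the sign becomes $\Ttilde(B\to C)$ which is handled by the $\Ttilde$ cases of part (i). For $\calS\in\{\Ttilde,\That\}$ we always have $A\equiv B\to C$: from $\alpha\realizza\Ttilde A$ or $\That A$ we get $\alpha\realizza\T A$, which by the $\T$ case already proved gives $\beta\models A$ for all $\beta\geq\alpha$.

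For part (ii), $\alpha\realizza\F A$ (or $\Fu A$, $\Fn A$) means $\alpha\nonforza A$ (in the $\Fn$ case $\alpha$ has a successor not forcing $A$, but I only need to rule out $\alpha\models A$). The base case $A$ atomic: $\F A\in S$ (or $\Fu A$) means $\alpha\nonforza A$, hence $A\neq\true$, and $\T A\notin S$, $\Ttilde A\notin S$ by non-inconsistency, so $\alpha\not\models A$; the case $\Fn A\in S$ with $A$ atomic is handled because the rule $\F$-decide or $\Fn\Ttilde$ ensures the relevant next-world set witnesses non-forcing, and at $\alpha$ itself $\T A\notin S$. The inductive step: $A=B\land C$ with $\alpha\not\models A$ to be shown — since $\alpha\nonforza B\land C$, either $\alpha\nonforza B$ or $\alpha\nonforza C$; applying the rule $\Fu\land$ (or the corresponding decomposition), $S$ evolves to a set containing $\Fu B$ (resp.\ $\Fu C$) or $\F B$ (resp.\ $\F C$), so the induction hypothesis gives $\beta\not\models B$ or $\beta\not\models C$ at the relevant set, hence $\alpha\not\models B\land C$; disjunction is dual. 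For $A=B\to C$: $\alpha\nonforza B\to C$ means there is $\gamma\geq\alpha$ with $\gamma\forza B$, $\gamma\nonforza C$; the rule $\Fu\to$ replaces $\Fu(B\to C)$ by $\T B,\Fu C$, and by part (i) applied to $\T B$ we get $\alpha\models B$, while by the induction hypothesis of part (ii) applied to $\Fu C$ we get $\alpha\not\models C$, so the clause (iv) of $\models$ fails and $\alpha\not\models B\to C$. The $\Fn$-implication case additionally uses that $\Fn(B\to C)\in S$ makes clause (iv) fail outright by its first conjunct.

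The main obstacle is the $A=B\to C$ case of part (i), because the rule $\T\to$ has three conclusions and the sign on $B\to C$ can persist as $\Ttilde$ or $\That$ rather than being decomposed; I must trace how these signs propagate through $\That$-decide and $\Fn\Ttilde$ down the spine of the constructed model, and verify that in the constructed $\uK$ the set $S$ actually moves (via the recursion) to one of these conclusions, so that the induction hypothesis is available at the right set. The delicate point is matching "for every $\beta\in P$ with $\alpha\leq\beta$" against the fact that $\models$ is defined only on the particular sets appearing in the construction — this is why the statement is restricted to sets $S$ "occurring in the construction," and the key lemma-internal fact I would isolate first is that along the spine of $\uK$, $\T$-atoms only increase and every $\Ttilde$/$\That$ formula eventually either stays or gets its antecedent decided, so clause (iv) of $\models$ is monotone along $\leq$.
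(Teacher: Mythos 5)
Your overall strategy --- induction on the structure of $A$, tracing how the recursion of $\ftab$ decomposes the signed formula and invoking the induction hypothesis at the subsequent sets of the construction --- is the same as the paper's, and your treatment of the atomic cases and of the $\land$, $\lor$ and negative ($\F$, $\Fu$, $\Fn$) cases is along the right lines (the paper in fact leaves most of these implicit). But there is a genuine gap, and it sits exactly where the paper's proof does its real work: the case $A=B\to C$ of part~(i). Your argument is circular. In the $\T(B\to C)$ case, the third conclusion of $\T\to$ turns the sign into $\Ttilde(B\to C)$, and you dispose of that branch by saying it ``is handled by the $\Ttilde$ cases of part (i)''; but your $\Ttilde$/$\That$ case is in turn discharged by observing that $\alpha\realizza\Ttilde A$ implies $\alpha\realizza\T A$, ``which by the $\T$ case already proved'' gives the conclusion. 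Besides the circularity, that reduction fails syntactically: the $\T$ case you proved rests on $\T(B\to C)$ being a member of $S$ and hence being decomposed by $\T\to$ in the recursion, whereas a $\Ttilde$-signed implication is never touched by $\T\to$ --- it is processed only by $\That$-decide and $\Fn\Ttilde$ --- so the $\T$ case simply does not apply to it.

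The missing idea, which is how the paper breaks the circle, is that the first clause of the definition of $\models$ makes $\Ttilde(B\to C)\in S$ imply $S\models B\to C$ \emph{outright}, with no induction at all. The paper accordingly splits the implication case into three subcases according to where along the spine of the constructed model the formula is finally resolved: (a) $\T C$ appears in a subsequent set realized by $\alpha$, and the induction hypothesis on $C$ gives $\beta\models C$, hence $\beta\models B\to C$, for every $\beta\geq\alpha$; (b) $\Fu B,\Tn C$ appears in a set realized by $\alpha$ itself, and the induction hypothesis on $B$ gives $\alpha\not\models B$ while the induction hypothesis on $C$ at the successors gives $\beta\models C$; (c) $\Fu B,\Tn C$ appears only at some strictly later world $\beta>\alpha$, and then every intermediate world $\gamma$ with $\alpha\leq\gamma<\beta$ still contains the formula with sign $\Ttilde$, so $\gamma\models B\to C$ holds by the base clause of $\models$, while worlds from $\beta$ onward are handled as in (b). You correctly identify this propagation of $\Ttilde$/$\That$ down the spine as ``the main obstacle,'' but you leave it unresolved rather than supplying the base-clause argument that closes it; without that, part~(i) for implications is not established.
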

\begin{proof}
  Note that (i) states that $\models$ is persistent.
  The proof proceeds by induction on $A$.
  \medskip
  \\
  Basis:  $A$ is an atomic formula. We have the cases $\T A$, $\Fu A$, $\Fn A$ and $\F A$.
  \\
  Case $\T A$. By construction of $\uK$, for every
  $\beta\in P$ such that $\alpha\leq\beta$, $\T A\in \beta$ holds and we get $\beta\models
  A$ by definition of $\models$.
  \\
  Case $\Fu A$. By construction $\Fu A\in \alpha$ and $\T A\not\in \alpha$. By definition
  of $\models$ we get $\alpha\not\models A$.
  \\
  Case $\Fn A$. By construction $\Fn A\in \alpha$. Since by construction there exists a
  subsequent set $S'$ of $\alpha$ such that $\Fu A\in S'$, it follows that $\T A\not\in
  S$, thus, by definition of $\models$ we get $\alpha\not\models A$.
  \\
  Case $\F A$. By construction $\Fu A\in \alpha$ or $\Fn A\in \alpha$ and we immediately
  get that $\alpha\not\models A$.
  \medskip
  \\
  Step: we proceed according to the outer connective of $A$.
  \\
  Case $\T A=\T(B\to C)$. By construction of proof in
  Theorem~\ref{theo:completeness} we have three cases: 
  (i) there is a subsequent set $S'$
  of $S$ such that $\T C\in S'$ and $\alpha\realizza S'$. For every
  $\beta\in P$ such that $\alpha\leq P$, $\beta\realizza \T C$ and by induction hypothesis
  we conclude $\beta\models C$; 
  (ii) there exists a subsequent set $S'$ such that $\Fu
  B,\Tn C\in S$. Thus $\alpha\realizza \Fu B,\Tn C$ and by induction
  hypothesis applied to $\Fu B$ we get $\alpha\not\models B$. Moreover by the construction
  we have that there exists a set $S''$ such that $\T
  C\in S''$ and for every $\beta\in P$ such that $\alpha<\beta$, $\beta\realizza \T C$. By
  induction hypothesis on $C$ we get that $\beta\models C$ that together
  $\alpha\not\models B$ proves that $\alpha\models B\to C$; 
  (iii) by construction there exist a
  subsequent  set $S'$ of $S$ and $\alpha$ such that $\Fu B,\Tn C\in S'$ and $\beta\in
  P$ such that $\alpha<\beta$ and $\beta\realizza S'$. By proceeding as in Point~(ii) we
  get that for every $\gamma\in P$ such that $\beta\leq \gamma$, $\gamma\models B\to C$. 
  Moreover for every $\gamma\in P$ such that $\alpha\leq\gamma$ and $\gamma<\beta$, 
  $\Ttilde A\in \gamma$. By definition of $\models$ we immediately get that 
  $\gamma\models \Ttilde(B\to C)$. Thus we have proved that for every $\beta\in
  P$ such that $\alpha\leq\beta$, $\beta\models B\to C$.

\end{proof}

Next Proposition~\ref{prop:forcingAndSatisfiability} is the main step to introduce
our new calculus. We express  the relationship between  
$\forza$ and $\models$ in the construction of the counter model
given in Theorem~\ref{theo:completeness}:
\begin{proposition}
  \label{prop:forcingAndSatisfiability}
  Let $S$ be a set and let us suppose that $\Fu A\in S$, the call $\ftab(S)$ returns \NULL
  and in the counter model $\uK=\langle P,\leq,\rho,\forza\rangle$ built in
  Theorem~\ref{theo:completeness}   there  exists an element of $P$ forcing $A$. Let $\alpha\in P$
  be the minimum world  
  such that $\alpha\forza A$. We have that
  $\alpha\models A$ and for every $\beta\in P$ such that $\beta<\alpha$, $\beta\not\models
  A$.
\end{proposition}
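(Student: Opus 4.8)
The plan is to prove the two claims by analysing the structure of $A$ together with the recursion tree of $\ftab(S)$, using Lemma~\ref{lemma:persistenza} as the main tool. Recall that by hypothesis $\Fu A\in S$, so $A$ is a subformula that is ``decided unknown at the present world'' of $S$, and $\alpha$ is the least element of $P$ (in the linearly ordered model built in Theorem~\ref{theo:completeness}) with $\alpha\forza A$. The \emph{second} claim, that $\beta\not\models A$ for all $\beta<\alpha$, follows almost immediately from Lemma~\ref{lemma:persistenza}(ii): since $A$ is unknown at the root world $\rho$ (indeed $\Fu A\in S=\rho$ in the construction), every world $\beta$ strictly below $\alpha$ satisfies $\beta\nonforza A$; by minimality of $\alpha$ together with persistence of $\forza$, such a $\beta$ carries some sign among $\F,\Fu,\Fn$ on $A$ (or on the relevant subformula), and the persistence direction of $\models$ combined with part (ii) of the lemma gives $\beta\not\models A$. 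I would spell this out by induction on the position of $\beta$ in the chain, using that each world $\beta<\alpha$ is, by the construction in Theorem~\ref{theo:completeness}, of the form where the formulas governing $A$ are $\F$/$\Fu$/$\Fn$-signed until the world $\alpha$ where $A$ first becomes forced.

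For the \emph{first} claim, $\alpha\models A$, I would argue by induction on $A$, mirroring the case analysis in the proof of Lemma~\ref{lemma:persistenza}. If $A$ is atomic, then $\alpha\forza A$ means, by construction of $\uK$, that $\T A\in\alpha$, whence $\alpha\models A$ by definition of $\models$. For $A=B\land C$: $\alpha\forza A$ forces $\alpha\forza B$ and $\alpha\forza C$; since $A$ is a subformula realized-$\Fu$ somewhere above $S$, the subformulas $B,C$ are themselves handled by the procedure and I can invoke the inductive hypothesis (or directly Lemma~\ref{lemma:persistenza}(i) if $\T B,\T C$ appear in the relevant sets) to get $\alpha\models B$ and $\alpha\models C$, hence $\alpha\models A$. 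The case $A=B\lor C$ is symmetric: $\alpha\forza B$ or $\alpha\forza C$, and the inductive hypothesis gives $\alpha\models B$ or $\alpha\models C$. The delicate case is $A=B\to C$: here $\alpha\forza A$ means that for all $\gamma\ge\alpha$, $\gamma\forza B$ implies $\gamma\forza C$. I must produce $\alpha\models(B\to C)$, which by definition requires $\Fn A\notin\alpha$ and ``$\alpha\models B$ implies $\alpha\models C$''. The point $\Fn A\notin\alpha$ holds because otherwise there would be a strictly later world where $A$ is unknown, contradicting persistence of $\alpha\forza A$. For the implication: if $\alpha\models B$, then by the persistence part of Lemma~\ref{lemma:persistenza}(i) (the contrapositive of part (ii), really) we would have $\alpha\forza B$ — this is the key bridge, that $\models$ at $\alpha$ entails $\forza$ at $\alpha$ for the antecedent, which I get from the lemma since any $\F$/$\Fu$/$\Fn$ sign on $B$ in $\alpha$ would give $\alpha\not\models B$; hence from $\alpha\forza B$ and $\alpha\forza(B\to C)$ we get $\alpha\forza C$, and then the inductive hypothesis applied to $C$ at the world $\alpha$ (which is the minimum world forcing $C$, or lies above it — either way Lemma~\ref{lemma:persistenza}(i) applies to give $\alpha\models C$) yields $\alpha\models C$.

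The main obstacle I anticipate is the bookkeeping in the implicative case: I need that, in the construction of Theorem~\ref{theo:completeness}, the world $\alpha$ where $A=B\to C$ first becomes forced is precisely a world where either $\T C$ has been introduced (the $S\models B$-vacuously-via-$C$ situation) or $B$ is ``unknown'' at and above $\alpha$ so the implication holds vacuously — and I must check that the $\models$-definition's clause (iv) is satisfied in both sub-cases without circularity in the induction. This requires carefully tracking which of the three sub-cases of the $\T(B\to C)$ analysis in Lemma~\ref{lemma:persistenza} produced the forcing of $A$ at $\alpha$, and confirming that in each the inductive hypotheses are invoked on formulas strictly smaller than $A$ and on worlds $\ge\alpha$, so that Lemma~\ref{lemma:persistenza}(i) delivers exactly the $\models$-facts needed. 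Once the interplay between the sign of $B$ at $\alpha$, the sign of $C$ at $\alpha$, and clause (iv) of $\models$ is pinned down, the rest is routine.
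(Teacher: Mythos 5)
Your overall shape (induction on $A$, with Lemma~\ref{lemma:persistenza} as the transfer device between signs and $\models$) matches the paper, but there are two genuine gaps.

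First, you miss the observation that does most of the work for the second claim: since $\Fu A\in S$ and $S=\rho$, the sign $\Fu$ says that $A$ is unknown at $\rho$ and known at every later world, so the minimum world $\alpha$ forcing $A$ is the \emph{immediate successor} of $\rho$, and ``for every $\beta<\alpha$'' quantifies over $\rho$ alone. Your proposed ``induction on the position of $\beta$ in the chain'', resting on the assertion that each $\beta<\alpha$ ``carries some sign among $\F,\Fu,\Fn$ on $A$'', is not supported by the construction: the worlds of $P$ are the sets present when $\Fn\Ttilde$ (or the final step) fires, by which point $A$ has typically been decomposed and no signed occurrence of $A$ itself need survive in $\beta$. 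The paper instead shows $\rho\not\models A$ by tracing which branch of the $\Fu$-rule for the outer connective of $A$ was taken, finding a set $S'$ in the recursion stack (realized by $\rho$) containing the appropriately $\T$/$\Fu$/$\F$-signed immediate subformulas, and applying Lemma~\ref{lemma:persistenza} to \emph{those}. (The general situation you are implicitly attacking, where $\alpha$ is not the immediate successor of $\rho$, is exactly what the paper flags as the difficulty deferred to the calculus $\et$.)

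Second, in the case $A=B\to C$ your ``key bridge'' is unsound: the contrapositive of Lemma~\ref{lemma:persistenza}(ii) only tells you that no $\F$/$\Fu$/$\Fn$-signed occurrence of $B$ sits in a set realized by $\alpha$; it does not deliver $\alpha\forza B$. No such $\models\Rightarrow\forza$ lemma is available at this point of the paper. The paper's argument sidesteps the conditional entirely: the rule $\Fu\to$ puts a set $S'$ with $\T B\in S'$ and $\Fu C\in S'$ on the recursion stack, so $\rho\forza B$, hence $\alpha\forza B$ by persistence, hence $\alpha\forza C$; the induction hypothesis then applies to $C$ (whose $\Fu$-signed occurrence and minimality of $\alpha$ are certified by $\Fu C\in S'$), giving $\alpha\models C$ and therefore $\alpha\models B\to C$ unconditionally, while $\T B,\Fu C\in S'$ give $\rho\not\models B\to C$. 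Your conjunction and disjunction cases suffer the same omission in milder form: the hypotheses of the induction hypothesis ($\Fu$-signed subformula in a set on which $\ftab$ returns \NULL, with $\alpha$ minimal for that subformula) must be certified by the explicit case split on the branches of $\Fu\land$ and $\Fu\lor$, which you gesture at but do not carry out.
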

\begin{proof}
  By the construction given in Theorem~\ref{theo:completeness}, the hypothesis
  $\Fu A\in S$ implies $\rho\nonforza A$. 
  Moreover, by the meaning of the sign $\Fu$ we have that $\alpha$ is the immediate
  successor of $\rho$.
  What we are going to prove is that if in the
  construction of the counter model $\uK$ the formula $\Fu A$ is occurred and there exists 
  $\alpha\in P$ such that  $\alpha\forza A$, then the syntactical
  information in $\alpha$ allows us to prove $\alpha\forza A$ 
  via $\alpha\soddisfa A$ also when $\T A\not\in \alpha$.  
  We proceed by induction on $A$. \\
  Basis: $A$ is a propositional variable. We have that $\alpha\forza A$ iff
  (by definition of $\forza$) $\T A\in \alpha$ iff $\alpha\soddisfa A$ (by
  definition of $\soddisfa$). Moreover, for every $\beta<\alpha$, since
  $\beta\nonforza A$ we have that $\T A\not\in \beta$, thus
  $\beta\not\models A$.
  \\
  Step:
  \\
  Case $A=B\to C$. 
  In the stack of the recursive calls of $\ftab$(S) there exists a subsequent
  set $S'$ of $S$  such that $\T B\in S'$ and $\Fu C\in S'$. By the
  completeness theorem we have $\rho\realizza S'$, thus 
  $\rho\forza B$ and $\rho\nonforza C$. 
  Since $\alpha\forza B\to C$, from $\rho\forza B$  it follows
  that $\alpha\forza C$. 
  By induction hypothesis on $C$, $\alpha\models C$. Thus we conclude that
  $\alpha\models B\to C$.
  Since $\T B\in S'$, $\Fu C\in S'$ and $\rho\realizza S'$, by
  Lemma~\ref{lemma:persistenza} we get that 
  $\rho\not\models B\to C$ holds.
  \\
  
  \noindent
  Case $A=B\land C$. We have three cases: (i) in the stack of the recursive calls there
  exists a subsequent set $S'$ of $S$ such that $\Fu B,\Fu C\in S'$. By
  Theorem~\ref{theo:completeness} $\rho\realizza \Fu B,\Fu C$, thus $\rho\nonforza B$ and
  $\rho\nonforza C$. Since $\alpha\forza A\land B$ we have $\alpha\forza A$ and
  $\alpha\forza B$. 
  By induction hypothesis applied to $B$ and $C$ we get 
  $\alpha\models B$ and $\alpha\models C$, thus
  $\alpha\models B\land C$,  $\rho\not\models B$ and $\rho\not\models C$, thus
  $\rho\not\models B\land C$; 
  (ii) $\Fu B,\T C\in S'$. By Theorem~\ref{theo:completeness},
  $\rho\realizza \Fu B,\T C$. Since $\alpha\forza B\land C$, we have $\alpha\forza B$ and
  $\alpha\forza C$. By induction hypothesis and Lemma~\ref{lemma:persistenza}, $\alpha\models B\land C$. Moreover, since
  $\rho\realizza S'$ and $\Fu B\in S'$ we get $\rho\not\models B$. (iii) $\T B,\Fu C\in
  S$. The case goes as (ii).
  \\
  Case $A=B\lor C$. We have two cases: (i) $\F C,\Fu B\in S'$. By completeness theorem
  $\rho\realizza S'$. By hypothesis, $\alpha\realizza B$, thus by induction hypothesis
  applied to $B$, $\alpha\models B$ and $\rho\not\models B$. This
  implies  $\alpha \models B\lor
  C$. Now, since $\F C\in S'$ we have that there exists a set $S''$ subsequent to $S'$
  such that $\rho\realizza S''$ and $\Fu C\in S''$ or $\Fn C\in S''$. In both cases we
  get $\rho\not\models C$ and thus $\rho\not\models B\lor C$. (ii) $\F B, \Fu C\in
  S'$. The case goes as (i).
  
\end{proof}
Note that in the proof above, we take advantage from the fact that the world $\alpha$ is
the immediate successor of
$\rho$ and, as in Case $A=B\lor C$, we appeal to the fact $\rho\realizza S'$. The
difficult part will come when, by construction, we cannot say that  
the world $\alpha$ is the immediate successor of $\rho$. We will face this
problem with our final calculus $\et$. 
The strategy employed by function $\ftab$ implies that a $\F$-formula
sooner or later become a $\Fu$-formula. We can use the result above to get calculus \set,
which represents a first slight change to calculus \npwt: 
\begin{itemize}
\item We leave out the signs $\Ttilde$ and $\That$ and the rule
  $\That$-decide. 
\item the new calculus $\set$ has the rules $\T\land$, $\T\lor$,
  $\Fu\land$, $\Fu\lor$, $\Fu\to$ and $\F$-decide of $\npwt$.
  Rule $\Fn\Ttilde$ now becomes a rule handling $\Fn$-formulas only, thus we refer to it
  with the name of $\Fn$. Finally, $\set$ has the rules in Figure~\ref{fig:set};
  \begin{figure}[t]
    \[
    \begin{array}[t]{|rcl|}
      \hline
      \Tab{S,\T(A\to B)}{S,\T A\EE S,\F A,\To(A\to B)}{\T\to_1}
      &
      &
      \Tab{S,\To(A\to B)}{S,\T B}{\To\text{, provided $S\models A$}}\\
      \hline
    \end{array}
    \]
    \caption{Rules for \set}
    \label{fig:set}
  \end{figure}
\item relation $\models$ needs to be redefined according to the  syntax of the new
  calculus: Let $S$ be a set of signed formulas and $A$ a formula,  
  we write $S\models A$ iff $\T A\in S$,
  $\To A\in S$  
  or one of the following conditions hold: 
  (i)   $A=\true$;  
  (ii) $A=B\land C$ and $S\models B$ and $S\models C$; 
  (iii)  $A=B\lor C$ and  $S\models B$ or $S\models C$; 
  (iv)   $A=B\to C$, $\Fn A\not\in S$ and   if $S\models B$ then $S\models C$.

\item the sign  $\mathbf{\overline{\T}}$ is introduced to mark forced
  formulas of the kind $A\to B$ that are not at disposal of the rule $\T\to$ because
  already handled previously in the branch.  
  By the
  propositions given above, if $S\not\models A$ holds, then meaning of $\To(A\to B)$ is exactly
  the same of $\Ttilde(A\to B)$.  
\end{itemize}
By using previous results it is not difficult  design a decision
procedure based on \set and to prove correctness and completeness. In such a procedure
rule $\To$ is possibly applied if no other rule but $\Fn$ is applicable.

Now we can do another step and get rid of sign $\Fu$  and  rule $\F$-decide. 
The propositions given above use the fact that the information about an $\F$-formula is
not syntactically lost. As a matter of fact, 
every $\F$-formula is handled by $\F$-decide and sooner or
later a $\F$-formula is turned into a $\Fu$-formula and in the meantime the $\F$-formula
has become a $\Fn$-formula.

The rules of this new calculus $\et$ are given in
Figure~\ref{fig:thirdCalculus}. The calculus works on the signs $\T$
and $\F$. The sign $\To$ labels formulas that are not at disposal of deduction, thus it is
not part of the  object language. 
The signs $\Fu$, $\Tn$ and $\Fu$ are no longer necessary to
get a calculus obeying the subformula property. 
A set $S$ is inconsistent iff $\T\false\in S$ or $\{\T A,\F A \}\subseteq S$. 
Note rule $\F\land$ where both $A$ and $B$ 
occur. This is necessary to get for $\et$ the analogous of
Proposition~\ref{prop:forcingAndSatisfiability}. For this calculus
relation $\models$ is defined as follows:
$S\models A$ iff $\T A\in S$,
$\To A\in S$  
or one of the following conditions hold: 
(i) $A=B\land C$ and $S\models B$ and $S\models C$; 
(ii)  $A=B\lor C$ and  $S\models B$ or $S\models C$; 
(iii)   $A=B\to C$, $\F A\not\in S$ and   if $S\models B$ then $S\models C$.
As for the rules of the calculus, $\To$ is the only rule requiring a proof of
correctness. Moreover, for every rule of $\et$ but $\F\to$, it is immediate to check that
if an element $\alpha\in P$ of a model $\uK=\kripke$
realizes one of the sets in the
conclusion, then $\alpha$ also realizes the premise. 
\begin{figure}[h]
{
  \small
  \[
  \begin{array}[t]{|c|}
    \hline
    \begin{array}{ cccc }
      \Tab{S, \T (A \land B) }{S, \T A ,  \T B }{\T\land} 
      &
      \Tab{S,\F (A\land B)}{S,\F A,\F B|S,\F A,\T B | S, \T A, \F B}{\F\land}
      &
      \Tab{S, \T (A \lor B) }{S, \T A | S, \T B }{ \T\lor} 
      &
      \Tab{S,\F(A\lor B)}{S,\F A,\F B}{\F\lor} 
    \end{array}
    \\\\
    \begin{array}{cc}
      \Tab{S, \T (A \to B) }{S, \T B | S,\F A, \To (A\to B) }{\T\to_1} 
      &
      \Tab{S,\To(A\to B)}{S,\T B}{\To}\ \  
      \begin{minipage}[c]{0.5\linewidth}
        provided $S\models A$ 
      \end{minipage}
    \end{array}
    \\\\
    \Tab
    {
      S, \F (A_{1}\to B_1) ,\dots, \F (A_u\to B_u) 
    }
    { 
      S_c,V_{1} |\dots |S_c,V_u
    }
    { 
      \F\to
    }
    \\
    \text{where:}\hspace*{\fill}\\
    \begin{array}{l}
      \text{ for }j=1,\dots,u\\ 
      \begin{array}{ll}
        V_j = &(\{\F (A_{1}\to B_1),\dots, \F (A_u\to B_u)
        \}\setminus\{\F(A_j\to B_j) \})\cup \{\T A_j, \F B_j \}\\[1ex]
        S_c = &\{\T A \in S \} \cup \{\To A\in S\} 
      \end{array}
    \end{array}
    \\
    \hline
  \end{array}
  \]  
}
  \caption{The calculus \et}
  \label{fig:thirdCalculus}
\end{figure}
The following Function $\gtab$ uses calculus $\et$ to decide a set $S$. We recall that the
formulas in $S$ can be written only using $\T$ and $\F$, since $\To$ is a private
labelling of the deduction and as far as concerns the deduction $\To$-formulas are $\T$
formulas which are not at disposal of deduction.
\\ 
{\sc Function} \gtab(S)\\
\newcounter{bpasso}
\setcounter{bpasso}{1} 
\newcounter{bcounter:firstStep}
\setcounter{bcounter:firstStep}{\value{bpasso}}                  
{\bf   \arabic{bpasso}.} If $S$ is an inconsistent set, then $\gtab$ returns the proof $S$;
\\[1ex]
\addtocounter{bpasso}{1}
{\bf \arabic{bpasso}.} If an $\alpha$-rule applies to $S$, then let $H$ be a
$\alpha$-formula of $S$. If $\gtab((S\setminus\{H\})\cup \Rcal_1(H))$ returns a
proof $\pi$, then $\gtab$ returns the proof $\NewTab{S}{\pi}{Rule(H)}$,
otherwise $\gtab$ returns \NULL;\\[1ex]
\addtocounter{bpasso}{1}
{\bf \arabic{bpasso}.}  If a $\beta$-rule applies to $S$, then let $H$ be a
$\beta$-formula of $S$. 
Let $\pi_1=\gtab((S\setminus \{H \})\cup\Rcal_1(H))$
and $\pi_2=\gtab((S\setminus \{H\})\cup\Rcal_2(H))$. 
If $\pi_1$ or $\pi_2$ is \NULL, then 
$\gtab$ returns \NULL,  otherwise
$\gtab$
returns $\NewTab{S}{\pi_1\EE \pi_2}{Rule(H)}$;
\\[1ex]
\addtocounter{bpasso}{1}
{\bf \arabic{bpasso}.}  If  rule $\F\land$ applies to $S$, then let $H=\F(A\land B)$ be a
formula in $S$. 
Let $\pi_1=\gtab((S\setminus \{H \})\cup\{ \F A,\F B  \})$,
$\pi_2=\gtab((S\setminus \{H\})\cup\{\F A,\T B  \})$ and
$\pi_3=\gtab((S\setminus \{H\})\cup\{\T A,\F B  \})$. 
If $\pi_1$, $\pi_2$ or $\pi_3$ is \NULL, then 
$\gtab$\ returns \NULL,  otherwise
$\gtab$\
returns $\NewTab{S}{\pi_1\EE \pi_2 \EE \pi_3}{\F\land}$;
\\[1ex]  
\addtocounter{bpasso}{1} 
\newcounter{counter:To}
\setcounter{counter:To}{\value{bpasso}} 
{\bf \arabic{bpasso}.} If $\To(A\to B)\in S$ and $S\models A$, then
let $\pi_1=\gtab((S\setminus\{\To(A\to B)\})\cup\{\T B \})$.
If $\pi_1$ is \NULL then $\gtab$ returns \NULL, otherwise $\gtab$ returns $\NewTab{S}{\pi_1}{\To}$.
\\[1ex]
\addtocounter{bpasso}{1}
\newcounter{bcounter:f}
\setcounter{bcounter:f}{\value{bpasso}} 
{\bf \arabic{bpasso}.}  If the rule $\F\to$ applies to $S$, then 
let $S_{\F\to}=\{
\F( A\to B)\in S\}$ and $n=|S_{\F\to}|$.
If there exists $i\in\{1,\dots,n\}$, such that 
$\pi_i=\gtab(S_c\cup\Rcal_i(S_{\F\to}))$ is \NULL, then 
$\gtab$ returns \NULL.  Otherwise $\pi_1,\dots,\pi_n$ are proofs 
and  $\gtab$
returns $\NewTab{S}{\pi_1\EE \dots\EE\pi_n}{\F\to}$;\\[1ex]
\addtocounter{bpasso}{1}                
\newcounter{bcounter:lastStep}
\setcounter{bcounter:lastStep}{\value{bpasso}}                  
{\bf \arabic{bpasso}.} If none of the previous points apply, then $\gtab$ returns
\NULL.\\[1ex]
{\sc end function }.\\
We need to prove that  the properties of $\models$ still hold in the construction of $\gtab$.  
Following the lines of Lemma~\ref{lemma:persistenza}, we can prove that relation $\models$ is persistent:
\begin{lemma}
  \label{lemma:persistenza2}
  Let us suppose that $\T X\in S$. Then in the construction, for every subsequent set $S'$
  of $S$, we have that $S'\models X$.  
\end{lemma}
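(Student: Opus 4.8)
The plan is to prove the lemma by induction on the structure of $X$, along the lines of the proof of Lemma~\ref{lemma:persistenza}. The three facts I would use throughout are: (a) every rule of $\et$ replaces only its principal formula(s), so any $\T$-formula that is not chosen as principal is passed unchanged to every child of a set, and in particular rule $\F\to$ keeps all the $\T$- and $\To$-formulas of the premise inside $S_c$; (b) every $\F$-signed formula produced by a rule is a proper subformula of the principal formula of that rule; and (c) the sets occurring in the construction are consistent, since otherwise $\gtab$ would have returned a proof at Step~\arabic{bcounter:firstStep}.

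For the basis, $X$ is atomic and hence never the principal formula of a rule of $\et$; so by (a) we have $\T X\in S'$ for every $S'$ subsequent to $S$, and by the first clause in the definition of $\models$ this gives $S'\models X$ (this also covers $X=\true$, while $\T\false\in S$ makes $S$ inconsistent, so that case does not arise).

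For the inductive step, fix a set $S'$ subsequent to $S$ and look at the path from $S$ to $S'$ in the construction. If $X$ is never chosen as principal formula along it, then $\T X\in S'$ by (a) and we are done. Otherwise let $S_0$ be the first set on the path to which $Rule(\T X)$ is applied and $S_1$ the child of $S_0$ on the path; by (a) we have $\T X\in S_0$. If $X=A\land B$ then $\T A,\T B\in S_1$, and the induction hypothesis applied to the smaller formulas $A$ and $B$ (with $S_1$ in the role of $S$) gives $S'\models A$ and $S'\models B$, hence $S'\models A\land B$. If $X=A\lor B$ then $\T A\in S_1$ or $\T B\in S_1$, and the induction hypothesis gives $S'\models A$ or $S'\models B$, hence $S'\models A\lor B$. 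If $X=A\to B$, then $Rule(\T X)=\T\to_1$: in its right conclusion $\To(A\to B)\in S_1$, and by (a) this label is carried along the path until rule $\To$ is possibly applied to it, so either $\To(A\to B)\in S'$ and then $S'\models A\to B$ directly, or rule $\To$ was applied at some point, after which $\T B$ is present and we are back to the left-conclusion situation; in the left conclusion $\T B\in S_1$, so by the induction hypothesis on $B$ we have $S'\models B$, and then $S'\models A\to B$ follows as soon as $\F(A\to B)\not\in S'$, the implicational clause of $\models$ holding because $S'\models B$ does.

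The crux, and the only non-routine part, is this last proviso: $\F(A\to B)$ must not occur in $S'$ once $\T(A\to B)$ (equivalently $\To(A\to B)$) has occurred earlier in the branch. By (c), $\F(A\to B)\not\in S_1$, because $S_0$ contains $\T(A\to B)$; by (b), any later occurrence of $\F(A\to B)$ would have to arise by decomposing a strictly larger formula in which $A\to B$ occurs as a proper subformula, and I would rule this out by inspecting the $\gtab$-construction exactly as the proof of Lemma~\ref{lemma:persistenza} reads the corresponding facts off the construction of Theorem~\ref{theo:completeness}: along the branch building the counter model each implication is decided only once, so its $\F$-signed version does not reappear. Carrying out this construction-dependent bookkeeping (and noting that a simultaneous $\F$-side statement, as in part (ii) of Lemma~\ref{lemma:persistenza}, is available if needed) is the main obstacle; the remainder is a straightforward rule-by-rule induction.
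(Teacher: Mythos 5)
Your proof follows the route the paper itself indicates: the paper gives no written proof of this lemma beyond the remark that it goes ``following the lines of Lemma~\ref{lemma:persistenza}'', and your structural induction on $X$, reading the needed facts off the recursion of $\gtab$, is exactly that plan. The basis and the $\land$/$\lor$ cases are fine, and you have correctly isolated where all the difficulty lives: in the implication case, once the residue of $\T(A\to B)$ is just $\T B$, clause (iii) of $\models$ yields $S'\models A\to B$ only if additionally $\F(A\to B)\not\in S'$.

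However, the justification you sketch for that proviso is wrong, so the crux is not actually closed. It is not true that ``each implication is decided only once, so its $\F$-signed version does not reappear'': $\F(A\to B)$ can be (re)introduced at any later point by decomposing a strictly larger formula of which $A\to B$ is a proper subformula --- the right conclusion of $\T\to_1$ applied to $\T((A\to B)\to C)$ produces $\F(A\to B)$, as do $\F\lor$ applied to $\F((A\to B)\lor C)$ and the rule $\F\to$ applied to a set containing $\F(C\to(A\to B))$. Consistency does not rescue you either, because by then $\T(A\to B)$ has typically been replaced by $\To(A\to B)$ or by $\T B$, and $\{\T B,\F(A\to B)\}$ is not an inconsistent set of $\et$. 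What actually saves the lemma is different in the two phases. While $\To(A\to B)\in S'$ (and this label does persist, through $S_c$ in $\F\to$ and through every other rule), $S'\models A\to B$ holds outright by the $\To$-clause of $\models$, so the condition $\F(A\to B)\not\in S'$ is irrelevant there. After $\T B$ has been produced (by the left conclusion of $\T\to_1$ or by rule $\To$), a co-occurring $\F(A\to B)$ forces every continuation of the branch to close: $\F(A\to B)$ can only be discharged by the conclusion of $\F\to$ that introduces $\T A,\F B$ next to the retained $\T$-information entailing $B$, and that conclusion is (after further decomposition) inconsistent. Hence such sets never lie on the \NULL-returning branch, which is what ``in the construction'' quantifies over. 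Without this closure argument --- or an equivalent invariant proved simultaneously with an $\F$-side statement, as in part (ii) of Lemma~\ref{lemma:persistenza} --- the induction does not go through, so the step you defer as ``bookkeeping'' is a genuine gap rather than a routine verification.
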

In the following lemma we sketch correctness and completeness of $\gtab$.
\begin{theorem}
  Let $S$ be a set of formulas. We have that:\\ 
  (i) if $\gtab(S)$ returns \NULL, then there exists a Kripke
  model $\uK=\kripke$ such $\rho\realizza S$;   
  \\
  (ii) if $\gtab(S)$ returns a proof, then for every Kripke model $\uK=\kripke$ and for
  every $\alpha\in P$, $\alpha\nonrealizza S$. 
\end{theorem}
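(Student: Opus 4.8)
The plan is to prove the two parts by mutual induction on the number of nested recursive calls of $\gtab(S)$, in direct analogy with the development already carried out for $\ftab$ (Theorem~\ref{theo:completeness}) and the two propositions relating $\forza$ and $\models$. For part~(i), I would assume $\gtab(S)$ returns \NULL\ and build a model $\uK=\kripke$ with $\rho=S$, exactly as in the proof of Theorem~\ref{theo:completeness}: the cases for $\alpha$- and $\beta$-rules (Steps~2 and~3) are routine since those rules are invertible in the strong sense stated in the excerpt (if $\alpha$ realizes a conclusion it realizes the premise), and the $\F\land$ case (Step~4) is handled the same way using the three-way branching. The genuinely new cases are Step~5 (rule $\To$) and Step~6 (rule $\F\to$). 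For Step~6, $\gtab$ reaches this step only when no other rule applies, so $S$ consists of $\T$- and $\To$-atomic formulas together with $\F$-implications $\F(A_1\to B_1),\dots,\F(A_u\to B_u)$; for each $i$ the recursive call on $S_c\cup\Rcal_i(S_{\F\to})$ returns \NULL, giving a model $\uK_i$ with root $\rho_i$ realizing it, and I would glue one of these (the one that will sit immediately above $\rho$) below a fresh root $\rho=S$, forcing exactly the variables $p$ with $\T p\in S$. The point to check, as in Theorem~\ref{theo:completeness}, is that $\rho$ realizes every $\F(A_j\to B_j)$: the chosen $\uK_i$ contains $\T A_i,\F B_i$ so $\rho_i\nonforza A_i\to B_i$ hence $\rho\nonforza A_i\to B_i$, and for $j\neq i$ the formula $\F(A_j\to B_j)$ is passed along, so the persistence/construction argument applies recursively.

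The crux is the $\To$-rule, and this is where Lemma~\ref{lemma:persistenza2} and the analogue of Proposition~\ref{prop:forcingAndSatisfiability} for $\gtab$ do the work. For correctness of $\To$ (needed both for the realizability-preservation direction feeding part~(ii) and implicitly for soundness of the $\models$-side condition in part~(i)), I must show: if $\alpha\realizza S$ with $\To(A\to B)\in S$ and $S\models A$, then $\alpha\forza B$, so that $\alpha$ realizes the conclusion $S,\T B$. The sign $\To(A\to B)$ means, by the intended reading carried over from $\Ttilde$, that $\alpha\forza A\to B$; and $S\models A$ must be shown to entail $\alpha\forza A$ via the $\gtab$-analogue of Lemma~\ref{lemma:persistenza}(i), namely that $S\models X$ implies $\alpha\models X$ and in the relevant construction $\alpha\forza X$. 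Then $\alpha\forza A\to B$ and $\alpha\forza A$ give $\alpha\forza B$. Symmetrically, for part~(ii) I would argue that each rule of $\et$ preserves the property ``if $\alpha$ realizes some conclusion then $\alpha$ realizes the premise'' — explicitly noted in the excerpt for every rule but $\F\to$, and for $\To$ just established — while $\F\to$ requires the separate soundness argument (as for $\Fn\Ttilde$ in the Correctness section): if $\alpha$ realizes the premise then some $\beta>\alpha$ realizes one of the conclusions, contradicting by induction that the subtree is a closed proof table; since an inconsistent set is realized by no world (Proposition~\ref{prop:contraddittorieta}), a closed proof table for $S$ forces $\alpha\nonrealizza S$ for all $\alpha$.

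The main obstacle I expect is exactly the $\To$ case of part~(i) when the world where $A$ first becomes forced is \emph{not} the immediate successor of $\rho$ — the difficulty the authors explicitly flag in the paragraph after Proposition~\ref{prop:forcingAndSatisfiability} (``The difficult part will come when $\ldots$ the world $\alpha$ is the immediate successor of $\rho$''). In the $\et$ setting the signs $\Fu,\Tn$ are gone, so I cannot read off from the syntax that the successor is immediate; instead I must establish a strengthened invariant on the sets produced by $\gtab$ guaranteeing that whenever $\To(A\to B)$ is introduced (i.e., after a $\T\to_1$ step that put $\F A$ alongside it), the relation $S\models A$ stays false all along the branch until the construction reaches a set $S'$ with $\T A$ (or $\T$-descendants making $S'\models A$), at which point $\To$ fires and is sound. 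Proving this invariant — essentially a syntactic shadowing of ``$\alpha\nonforza A$ persists until exactly the world forcing $A$'', carried over through $\F\to$ which rebuilds the models — is the technical heart, and it is what Lemma~\ref{lemma:persistenza2} is set up to feed. Once that invariant is in hand, the remaining bookkeeping (the $\alpha$/$\beta$/$\F\land$ cases, gluing roots, checking forcing of atoms) is routine and mirrors Theorem~\ref{theo:completeness} line for line.
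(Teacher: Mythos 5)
Your proposal is correct and follows essentially the same route as the paper: induction on the number of nested recursive calls, gluing a fresh root $\rho=S$ below the model obtained from the recursive call, and reducing the soundness of the $\To$ rule (and the realizability of $\To(A\to B)$ at steps where $S\not\models A$) to a claim that for worlds arising in the construction, $U\models Z$ forces $\beta\realizza\T Z$ and $U\not\models Z$ forces $\beta\realizza\F Z$ --- which is exactly the Claim the paper proves by induction on $Z$, fed by Lemma~\ref{lemma:persistenza2}. The ``strengthened invariant'' you anticipate as the technical heart is precisely that Claim, so no genuinely different decomposition is involved.
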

\begin{proof}
  We proceed by induction on the number of nested recursive calls. Note that if function $\gtab$
  returns \NULL, the elements of the Kripke model we build are the sets
  of formulas involved in Steps~\arabic{bcounter:f} and~\arabic{bcounter:lastStep}.   
  \\
  {\em Basis:} There are no recursive calls.\\
  \noindent
  (i) If $G(S)$ returns \NULL, then  Step~\arabic{bcounter:lastStep} has been
  performed. We notice that $S$ 
  is not inconsistent (otherwise Step~\arabic{bcounter:firstStep} would have been
  performed). 
  Indeed, 
  $S$ only contains atomic formulas
  signed with $\T$ or $\F$.
  It is easy to prove that the model
  $\uK=\kripke$, where $\rho=S$, $P=\{\rho\}$, $\rho\leq\rho$ and $\rho\forza p$
  iff $\T p\in S$, realizes $S$.\\
  (ii) If $G(S)$ returns a proof, then Step~\arabic{bcounter:firstStep} is performed, thus
  $S$ is inconsistent and an inconsistent set is not realizable. 
  \\
  \noindent
  {\em Step:}  By induction hypothesis we assume that the proposition holds for all sets
  $S'$ such that $\gtab(S')$ requires less than $n$ recursive calls. 
  To prove the proposition holds for a set 
  $S$ such that $\gtab(S)$ requires $n$ recursive calls, one has to  inspect
  all the possible steps of $\gtab$. 
  \\
  {\em Let us suppose that Step~\arabic{counter:To} is performed.} Thus we have that $\To(X\to
  Y)\in S$ and $S\models X$. The call $\gtab((S\setminus\{\To(X\to Y)\})\cup \{\T Y \})$
  is performed. We have to analyze two main cases:\\
  (i) The call $\gtab((S\setminus\{\To(X\to Y)\})\cup \{\T Y \})$ returns \NULL. By
  induction hypothesis there is a model $\uK=\kripke$ such that $\rho\realizza S\setminus\{\To(X\to
  Y)\})\cup \{\T Y \}$, thus $\rho\realizza S$;
  \\
  (ii) The call $\gtab(S\setminus\{\To(X\to Y)\})\cup \{\T Y \})$ returns a proof. We have
  to show that the rule application is correct. 
  We want to prove that if $S\models X$, $\alpha$ is an element of a Kripke model such that
  $\alpha\forza p$ iff $\T 
  p\in S$ and $\alpha\realizza S$, then $\alpha\realizza\T X$.
  By construction we have that in the stack of the recursive calls there exists a set
  $S_0$ such that $\T(X\to Y)\in S_0$ and a subsequent set $S_1$ of $S_0$ such that $\F
  X\in S_1$. This means that $S_1\not\models X$.  
  \begin{claim}
    Let $U$ be a set of the construction, $Z$ a formula and $\beta$ an element of a Kripke
    model respectively meeting the conditions of $S$, $X$ and $\alpha$. We claim
    that:\\
    (i) if $U\not\models Z$ and $\beta\realizza U$, then $\beta\realizza \F Z$;
    \\
    (ii) if $U\models Z$ and $\beta\realizza U$, then $\beta\realizza \T Z$.
    \begin{proof}
      The proof of the claim goes by induction on $Z$:
      \\
      {\em Basis:}
      $Z$ is an atomic formula.\\
      (i) If $U\not\models Z$, then $\T Z\not\in U$ and by the relation of forcing defined
      on $\beta$ we have $\beta\realizza \F Z$;\\
      (ii) if $U\models Z$, then $\T Z\in U$, thus $\beta\realizza \T Z$.
      \\
      {\em Step:} we only prove the case $Z=K\to H$.\\
      (i) $U\not\models K\to H$. We have two cases: (a) $\F(K\to H)\in U$, thus we
      immediately get $\beta\realizza \F(K\to H)$; (b) $\F(K\to H)\not\in U$. Thus
      $U\models K$ and $U\not\models H$. By induction hypothesis $\beta\realizza \T K$ and
      $\beta\realizza \F H$ and we get $\alpha\realizza \F(K\to H)$;
      \\
      (ii) $U\models K\to H$. Thus $\F(K\to H)\not\in U$. Since in the stack of the
      recursive calls there exists a set $S_1$ such that $\F(K\to H)\in S_1$, then there
      exists a subsequent set $S_2$ of $S_1$ such that $\T K,\F H\in S_2$. Thus
      $S_2\models K$. By Lemma~\ref{lemma:persistenza2}, $U\models K$ and thus $U\models
      H$. By induction hypothesis $\beta\realizza \T H$ and thus $\beta\realizza \T(K\to H)$. 
      \end{proof}
    \end{claim}
    Now, since $\alpha\realizza\To(X\to Y)$ means $\alpha\forza X\to Y$, by the claim we get
  $\alpha\forza X$ and thus $\alpha\forza Y$, that is $\alpha\realizza\T Y$ (note that by
  construction $\alpha$ meets the conditions of the claim). 
  \\
  {\em Let us suppose that Step~\arabic{bcounter:f} is performed.} Note that in this case $S$
  contains atomic formulas, formulas of the kind $\To(A\to B)$, with $S\not\models A$, and
  $\F(A\to B)$. Point~(ii)  is an easy task, since it is based on the fact that rule
  $\F\to$ preserves the realizability (Point~(ii) corresponds to the proof correctness of
  rule $\F\to$).   
  As for Point~(i), by induction hypothesis there exists a Kripke model $\uK'=\langle
  P',\leq'$, $\rho',\forza'\rangle$ such that $\rho'$ realizes one of the set in the
  conclusion of the rule. 
  We build the following structure $\uK=\langle P,\leq,\rho,\forza \rangle$ such that 
  \[
  \begin{array}[t]{rcl}
    P & = & P'\cup\{\rho\},\\ 
    \leq & = & \leq'\ \cup\ \{(\rho,\alpha)|\alpha\in P'\},\\ 
    \forza & = & \forza'\ \cup\ \{(\rho,p)| \T p\in S\},
  \end{array}
  \]
  where we set $\rho=S$.
  The difficult part in proving  $\rho\realizza S$ is to show that if $\To(A\to B)\in S$,
  then $\rho\realizza \To(A\to
  B)$. Since if Step~\arabic{bcounter:f} is performed and $\To(A\to B)\in S$ then
  $S\not\models A$. Note that by construction, in the stack of recursive calls, there exists a
  previous set $S_0$ of $S$ such that $\F A\in S$. Now by proceeding as in the claim above
  we can prove that $\rho\nonforza A$ and this allow us to get that $\rho\forza A\to B$.\\
  An analogous argument has to be applied when Step~\arabic{bcounter:lastStep} is
  performed, since in this case $S$ can contain $\To$-formulas. 
  
\end{proof}
 By inspecting the rules of the calculus, it is easy to prove that the procedure
 terminates and the depth of the deductions is linear in  the size of the formula to be
 decided.
 
The check to decide if rule $\To$ has to be applied is performed 
on every $\To$-formula when no other rule but $\F\to$ or possibly $\To$ is applicable.
Thus before every application of $\F\to$ or $\T\to$ the check is performed. Note that
every application of $\F\to$ and $\T\to$ erases at least an implication, thus along a
branch the number of times that the check is performed is linear in the length of the proof. 
A single check requires a linear number of steps in the number of connectives in the
antecedent. Summarizing, along a branch to check if  $\models$ holds requires a quadratic
number of steps in the size of the formula to be proved.    
 
\section{Conclusions}
In this paper we have presented two tableau calculi for propositional Dummett logic
obeying to the subformula property and whose deductions have respectively quadratic and
linear  depth in the size of 
the formula to be decided. The papers presented in  literature lack of fulfilling all these features. 

Both calculi do not require backtracking and are based
on a multiple premise rule. The object language of calculus \npwt contains signs
to characterize the semantical status of ``forced/non-forced in the next possible world'' or
``this is last possible world where the formula is not known'', which are also employed
in~\cite{Fiorino:2011}. Calculus \et uses the signs $\T$ and $\F$, that  is the semantics of the
signed formulas is restricted to the  
forcing or non-forcing, and the proof is built-up without the necessity of any particular
labelling. Calculus \et has a straightforward translation into a sequent calculus.

Our completeness theorems prove that  calculi \npwt and \et allow to provide a
procedure returning a 
counter model or a proof. In particular, 
a feature of $\npwt$ is that from a failed proof of a formula $A$ it is possible to
extract a counter model for $A$ whose depth is  $n+1$ at most, with $n$ the number of
propositional variables occurring in $A$.
From a remark on the completeness of \npwt we get
calculus \et. Calculus \et shows that the semantics of Dummett
logic implies that deduction  conveys syntactical information about implicative formulas 
that can be used to drive the deduction by means of a 
fast computational check on some formulas which are possibly not at disposal of the deduction.

The multiple premise rules such as 
$\Fn\Ttilde$ and $\F\to$, which are analogous to
the multiple premise rule  introduced
in~\cite{AveFerMig:99}, have been criticized because they have an arbitrary number
of premises and thus they are supposed not to be suitable for  automated deduction. In 
papers~\cite{Fiorino:2010,Fiorino:2011} 
we showed that implementations of systems equipped with a rule analogous to $\Fn\Ttilde$
and $\F\to$ are far better than the  implementation based on
decomposition systems of~\cite{AvrKon:2001,Larchey-Wendling:2007}, 
which reduce the formulas to implicative atomic formulas and then
applies transitivity rules or procedures based on graph reachability.

We note that it is possible to add some rules to optimize the proof search. As an
example, by refining the completeness theorem for \et, follows that given 
$\T(A\to B)$, if $A$ does not contain implications, then we can turn $\T(A\to B)$ into
$\To(A\to B)$, thus saving an application of $\T\to$ still preserving the completeness. 
We believe that there are more general cases on the syntax on $A$ that allow to avoid an
useless application of rule $\T\to$. Moreover, since the sign of the occurrence of $A$ in
$\T(A\to B)$ is $\F$, it could be possible to apply our check to $\F$-formulas in order to
avoid also useless applications of $\F$-rules. 

As a future work, the first question is an investigation along the above line, that  could
be useful both to
deepen the understanding of the proof theory of Dummett logic and to design more efficient
decision procedures.  Another question is to
extend, if possible, the same technique to the first-order case of Dummett logic. 
Finally, currently we are investigating  
how to adapt these techniques employed for \et to propositional intuitionistic logic, whose Kripke
semantics is more complicated than Dummett logic. Our preliminary results
show that both the syntactical check and the strategy  are more involved than 
those given for \et.

%\bibliographystyle{plain}
%\bibliography{biblio}

\begin{thebibliography}{10}

\bibitem{AveFerMig:99}
A.~Avellone, M.~Ferrari, and P.~Miglioli.
\newblock Duplication-free tableau calculi and related cut-free sequent calculi
  for the interpolable propositional intermediate logics.
\newblock {\em Logic Journal of the IGPL}, 7(4):447--480, 1999.

\bibitem{Avron:91}
A.~Avron.
\newblock Simple consequence relations.
\newblock {\em Journal of Information and Computation}, 92:276--294, 1991.

\bibitem{AvrKon:2001}
A. Avron and B. Konikowska.
\newblock Decomposition proof systems for g{\"o}del-dummett logics.
\newblock {\em Studia Logica}, 69(2):197--219, 2001.

\bibitem{BaaFer:99}
M.~Baaz and C.G. Ferm\"uller.
\newblock Analytic calculi for projective logics.
\newblock In Neil~V. Murray, editor, {\em Automated Reasoning with Analytic
  Tableaux and Related Methods, International Conference, TABLEAUX '99}, volume
  1617 of {\em Lecture Notes in Computer Science}, pages 36--50. Springer,
  1999.

\bibitem{BaaCiaFer:2003}
M. Baaz, A. Ciabattoni, and C.~G. Ferm{\"u}ller.
\newblock Hypersequent calculi for {G}{\"o}del logics -- a survey.
\newblock {\em J. of Logic and Computation}, 13(6):835--861, 2003.

\bibitem{Dummett:59}
M.~Dummett.
\newblock A propositional calculus with a denumerable matrix.
\newblock {\em Journal of Symbolic Logic}, 24:96--107, 1959.

\bibitem{DunMey:71}
J.~M. Dunn and R.~K. Meyer.
\newblock Algebraic completeness results for {D}ummett's {LC} and its
  extensions.
\newblock {\em Zeitschrift f{\"u}r Mathematische Logik und Grundlagen der
  Mathematik}, 17:225--230, 1971.

\bibitem{Dyckhoff:99}
R.~Dyckhoff.
\newblock A deterministic terminating sequent calculus for
  {G}\"{o}del-{D}ummett logic.
\newblock {\em Logic Journal of the IGPL}, 7(3):319--326, 1999.

\bibitem{Fiorino:00a}
G.~Fiorino.
\newblock An ${O}(n\log n)$-{\sc space} decision procedure for the
  propositional {D}ummett {L}ogic.
\newblock {\em Journal of Automated Reasoning}, 27(3):297--311, 2001.

\bibitem{Fiorino:2010}
G.~Fiorino.
\newblock Fast decision procedure for propositional Dummett logic
               based on a multiple premise tableau calculus.
\newblock {\em Information Sciences}, 180(19):3633--3646, 2010.

\bibitem{Fiorino:2011}
G. Fiorino.
\newblock Refutation in dummett logic using a sign to express the truth at the
  next possible world.
\newblock In Toby Walsh, editor, {\em IJCAI}, pages 869--874. IJCAI/AAAI, 2011.

\bibitem{Goedel:86}
K.~G\"odel.
\newblock On the intuitionistic propositional calculus.
\newblock In S.~Feferman et~al, editor, {\em Collected Works}, volume~1. Oxford
  University Press, 1986.

\bibitem{Hajek:98}
P.~Hajek.
\newblock {\em Metamathematics of Fuzzy Logic}.
\newblock Kluwer, 1998.

\bibitem{Larchey-Wendling:2007}
D. Larchey-Wendling.
\newblock Graph-based decision for {G}{\"o}del-{D}ummett logics.
\newblock {\em J. Autom. Reasoning}, 38(1-3):201--225, 2007.

\bibitem{MetOliGab:2003}
G.~Metcalfe, N.~Olivetti, and D.~M. Gabbay.
\newblock Goal-directed calculli for G{\"o}del-Dummett logics.
\newblock In Matthias Baaz and Johann~A. Makowsky, editors, {\em Computer
  Science Logic, 17th International Workshop, CSL 2003, 12th Annual Conference
  of the EACSL, and 8th Kurt G{\"o}del Colloquium, KGC 2003, Vienna, Austria,
  August 25-30, 2003, Proceedings}, volume 2803 of {\em Lecture Notes in
  Computer Science}, pages 413--426. Springer, 2003.

\bibitem{Visser:1982}
A. Visser.
\newblock On the completenes principle: A study of provability in heyting's
  arithmetic and extensions.
\newblock {\em Annals of Mathematical Logic}, 22(3):263 -- 295, 1982.

\bibitem{Vorobiev:58}
N.~N. Vorob'ev.
\newblock A new algorithm of derivability in a constructive calculus of
  statements.
\newblock In {\em Sixteen papers on logic and algebra}, volume~94 of {\em
  American Mathematical Society Translations, Series 2}, pages 37--71. American
  Mathematical Society, Providence, R.I., 1970.

\end{thebibliography}

\appendix
\section{My review of CLS reviewers}
 BERTRAND MEYER:
 \begin{quote}
  "Refereeing should be what it was before science publication turned
  into a business: scientists giving their polite but frank opinion on
  the work of other scientists." \href{http://cacm.acm.org/magazines/2011/11/138207-in-support-of-open-reviews-better-teaching-through-large-scale-data-mining/fulltext}{\small (CACM, Vol. 54 No. 11).}
\end{quote}

I submitted this paper to IJCAR 2012 and, in the present form to CSL 2012.  In both cases
it was rejected. Now it is my turn to give a review of reviewers and spend some words about
my experience as an author in proof-theory.
\\
I start with the facts: at CLS the paper had three reviewers. The first gave an accept and
was the only reviewer to read the paper. Reviewers 2 and 3 clearly read  the
introduction, at most, as anyone can understand from the general comments they give.
\begin{verbatim}
----------------------- REVIEW 2 ---------------------
PAPER: 49
TITLE: Terminating Calculi for Propositional Dummett Logic with Subformula Property
AUTHORS: Guido Fiorino

OVERALL RATING: -3 (strong reject)

This paper is presenting two new calculi for propositional Dummet logic aka
Goedel logic. This logic can be viewed both as an intermediate logic
(intuitionistic logic + axiom scheme (a->b)\/(b->a) ) or as a fuzzy logic
with operators over the unit interval.

While this is a nice paper in pure logic, it is not clear to me why
this paper is submitted to CSL.

1) The paper contains no motivation that relates to computer science
except for a reference to a famous 1991 paper [2] on simple consequence
relations. It is neither clear why [2] is called "recent" nor why it
is considered a CS motivation.

2) There already exist many calculi for this logic including
[1,3,8,9,10,11,14]

3) The paper contains no generic discussion why tableau calculi are
the right approach for Dummett logic. Given the simplicity of
the logic considered, and its simple semantic characterization in
terms of the unit interval (i.e., Goedel logic), one would expect that a
DPLL style procedure similar to standard SAT and SMT solving is more
efficient in practice. Reductions of fuzzy logics to arithmetic
solvers have been proposed by Haehnle and others in the 90ies.

4) There also is no methodological breakthrough which can be
generalized to other logics.

In conclusion I think the paper is lacking motivation.
----------------------- REVIEW 3 ---------------------
PAPER: 49
TITLE: Terminating Calculi for Propositional Dummett Logic with Subformula Property
AUTHORS: Guido Fiorino

OVERALL RATING: 1 (weak accept)

This paper describes two terminating calculi for propositional Goedel
Dummett logicwith subformula property which is not the important point as
subformula property can be always obtained by suitable choice of the
linguistic frame. The first calculus is completely straightforwardly
obtained from valuations in linearly ordered Kripke semantics, the claim on
the size of models is however trivial as only valuations of variables count
in Goedel-Dummett logics, they are projective. The second calculus is much
more interesting and the paper should concentrate on this. Furthermore the
paper has the deficiency for non-experts of providing no single example.
-----------------------------------------------------
\end{verbatim}
In my opinion the reviewers have a conflict interest and want to make space for their
papers, thus I consider them in bad faith.
\\
As regard review 3, he/she gives a borderline rating. Here we meet the first
characteristic of many reviews in proof-theory: ``the topic is not important''.  The
reviewer pretend of ignoring that there are many papers about calculi with the subformula
property and many authors consider this property important.  Statement ``suitable choice
of the linguistic frame'', means to have hypertableaux/hypersequents and/or labelled
systems. The advantage of my systems is in evidence in the introduction (see paragraph
starting with ``Papers [4] and [15] provide calculi ...'') but the reviewer has ignored my
considerations.
\\
This is one of the behaviours that I observed by reviewers in proof-theory: minimize the
idea and the interest of the problem, in order not to give importance to the whole paper,
even if there are many papers along the same line (note that at CLS 2003 a paper
addressing the same question was proposed and in all the quoted papers the efficiency or
the subformula property or the proof-system or the termination is addressed).
\\
Reviewer 2 is the typical coward that hide himself under anonymous review to make nasty
statements and to give a very bad mark without entering into technical details. The aim of
the reviewer is clear: to be sure that the paper is rejected, independently of the others
reviews. A strong reject implies that the paper contains technical errors that cannot be
clearly fixed. But here the review is not scientific and the program committee is responsible
for this (I wonder if the reviewer has read
the whole introduction or at least the abstract).
\\
The reviewer states that he/she does not understand my submission to the conference. To
understand the submission he/she should read CSL call for paper. The paper perfectly
matches the topic both in proof-theory and automated deduction. Point 2) is perfect to
understand the bad faith of the reviewer: the argument is that there are enough papers on
Dummett logic thus we do not need more. It's a pity, my paper is late! On this base, I
aspect that in
the future CSL will reject papers on Dummett/Goedel logic, {\bf independently of the name(s) of
  the author(s)}.  Also Point 3) deserves attention, because it is another typical
scheme to reject a/my paper: ``why to provide a calculus when there is a translation
into another logic?''  On this base we cannot have calculi for propositional
intuitionistic logic, since there exist translations in S4 or classical logic and so on
for many other logics. Variants of this are ``why do you use semantical techniques?'' and
``I don't like the presentation'' and, following the Point 4) ``the result is not
interesting because it cannot be generalized''.
\\

\noindent
I charge the
reviewers to have used anonymous review to
be unfair, biased and in bad faith instead of giving a frank scientific opinion. 
\\

\noindent
{\bf The problem is not the content of the paper but the name of the
author}. Proof-theory is a close world, a kind of private club made of some schools and
newcomers are not welcome. Thus can happen that also a trivial mistake as a typo is used as an
excuse to give the minimum rate and the original ideas are ignored. The result is that for
authors that are not part of the club it is almost impossible to have a paper accepted to a
conference, the timings to have a paper accepted on a journal are amplified and when the
papers is published it is not cited, also if pertinent.  
\\

\noindent
For these reasons I support the statement of Bertrand Meyer.
\end{document}